\documentclass[journal]{IEEEtran}

\usepackage{kotex}
\usepackage{graphicx}
\usepackage{amsmath}
\usepackage{amssymb}
\usepackage{amsthm} 
\usepackage{algorithm}
\usepackage{algpseudocode}
\usepackage{tabularx}
\usepackage[table]{xcolor}
\usepackage{hhline} 
\usepackage{subcaption}
\usepackage{amsfonts}
\usepackage{cite}
\usepackage{color}
\usepackage{url} 
\usepackage{tcolorbox}
\usepackage[dvipsnames]{xcolor}
\usepackage{booktabs}
\usepackage{multirow}
\usepackage{dsfont}

\newtheorem{theorem}{Theorem}
\newtheorem{lemma}{Lemma}
\newtheorem{proposition}{Proposition}

\algnewcommand\Input{\item[\textbf{Input:}]}
\algnewcommand\Output{\item[\textbf{Output:}]}

\algnewcommand{\IIf}[1]{\State\algorithmicif\ #1\ \algorithmicthen}
\algnewcommand{\ELIIf}[1]{\State\algorithmicelse\ \algorithmicif\ #1\ \algorithmicthen}
\algnewcommand{\ElseIIf}[1]{\algorithmicelse\ #1} 
\algnewcommand{\EndIIf}{\unskip\ \algorithmicend\ \algorithmicif}

\begin{document}
\title{Semantic-Aware Data-Aided Channel Estimation with Large Language Models for MIMO Systems}

\author{Sojeong Park,~\IEEEmembership{Graduate Student Member,~IEEE}, Jaehyun Choi, and Hyun Jong Yang,~\IEEEmembership{Senior Member,~IEEE}

\thanks{An earlier version of this paper was presented at the IEEE International Conference on Acoustics, Speech, and Signal Processing (ICASSP), 2026~\cite{park2026semantic}.

Sojeong Park is with the Department of Electrical Engineering, Pohang University of Science and Technology (POSTECH), Republic of Korea (e-mail: sojeong@postech.ac.kr).
Jaehyun Choi is with the Department of Electrical and Computer Engineering, Seoul National University, Republic of Korea (email: jhchoi0226@snu.ac.kr).
Hyun Jong Yang is with the Department of Electrical and Computer Engineering and Institute of New Media and Communications, Seoul National University, Republic of Korea (email: hjyang@snu.ac.kr).
(\textit{Corresponding author: Hyun Jong Yang})
}
}

\maketitle

\begin{abstract}
Data-aided channel estimation enhances spectral efficiency by reusing detected symbols as virtual pilots. In this process, selecting only reliable symbols is crucial to prevent misdetected symbols from corrupting the channel estimate. However, conventional methods rely exclusively on physical-layer statistics. Beyond physical-layer information, transmitted payloads possess inherent semantic structures that can be exploited to resolve detection errors. In this paper, we propose a novel semantic-aware channel estimation framework for multiple-input multiple-output (MIMO) systems that utilizes a fine-tuned large language model (LLM) to perform reliable symbol selection and correction based on semantic information. The framework employs a two-layer mechanism: one layer selects reliable decoded symbols through semantic verification, while the other selects accurately LLM-corrected symbols by cross-validating them against the received signal using physical-layer information. We prove that corrected symbols yield a strictly larger expected reduction in estimation error than initially correctly decoded symbols. Extensive simulations demonstrate that the proposed framework significantly outperforms conventional data-aided schemes in both normalized mean squared error and bit error rate, closely approaching the performance of an oracle estimator.
\end{abstract}

\begin{IEEEkeywords}
Semantic communication, data-aided channel estimation, channel estimation, pilot, large language model.
\end{IEEEkeywords}

\section{Introduction}
\label{sec:introduction}

\IEEEPARstart{L}{arge} language models (LLMs) are now central to a wide range of AI services, including conversational assistants, coding copilots, and agentic AI applications~\cite{zhao2023survey, jiang2026survey, wang2024survey, ouyang2022training}. Since LLM inference often exceeds the capacity of edge devices, these services predominantly run on remote servers, with recent work envisioning device-edge-cloud architectures for next-generation networks~\cite{lin2025pushing, chen2024enabling}. In such distributed architectures, natural language text, such as user prompts, model responses, and inter-agent messages, must be continuously exchanged over wireless networks. Unlike conventional data, this natural language carries rich statistical and contextual structure that conveys semantic meaning beyond the bit level. Such properties align with the research direction of semantic communication, which aims to design wireless systems based on semantic meaning rather than raw bit fidelity~\cite{gunduz2022beyond, yang2023semantic, chaccour2025less}.

Various approaches have been proposed for semantic communication~\cite{xie2021deep, bourtsoulatze2019deep, weng2021semantic, jiang2022wireless, liang2026generative, qin2026generative, park2026robust,li2026llm, hao2026semantic, kim2024pilot}.
A common approach is the design of deep learning-based source encoders that compress raw inputs into compact representations preserving semantic information, with representative examples spanning text~\cite{xie2021deep}, images~\cite{bourtsoulatze2019deep}, speech~\cite{weng2021semantic}, and video~\cite{jiang2022wireless}.
More recent studies use generative models and LLMs as semantic decoders to reconstruct semantically consistent content from compressed or channel-corrupted representations, using priors learned from the source distribution ~\cite{liang2026generative, qin2026generative, park2026robust}.
Related work has also applied LLMs to channel decoding, using the contextual structure of text payloads to support error correction ~\cite{li2026llm, hao2026semantic}. These schemes either embed language model priors directly within the channel decoder~\cite{li2026llm} or apply post-decoding semantic reconstruction with channel-level verification~\cite{hao2026semantic}.
Channel estimation has also been studied within semantic communication systems. For example,~\cite{kim2024pilot} jointly designs a deep learning-based channel estimator and a reinforcement learning-based pilot allocation strategy in a semantic communication framework. However, that approach refines the channel estimate using pilot signals alone without leveraging the semantic information of the payload.
More broadly, the use of payload semantics as direct evidence within the channel estimation stage has remained largely unexplored.

Beyond semantic communication, LLMs have also been applied to a variety of wireless physical-layer tasks. These include channel prediction~\cite{liu2024llm4cp}, beam prediction~\cite{sheng2025beam}, channel state information (CSI) feedback~\cite{cui2025exploring}, multi-task physical-layer processing~\cite{zheng2026large, liu2025llm4wm}, and network-level optimization~\cite{shao2024wirelessllm, noh2025adaptive}.
One line of work repurposes an LLM as a sequence model for numerical channel data. For example,~\cite{liu2024llm4cp} fine-tunes a pretrained LLM to forecast future CSI from historical channel sequences. Along the same direction,~\cite{sheng2025beam} converts past beam indices and angles of departure into text-like tokens for mmWave beam prediction. Similarly,~\cite{cui2025exploring} treats each per-subcarrier CSI vector as a token to reconstruct the compressed CSI. Other studies extend this idea to multiple tasks at once, adapting a single LLM backbone to several channel-associated tasks within a unified framework~\cite{zheng2026large, liu2025llm4wm}.
Another line of work instead leverages the language and reasoning abilities of LLMs to assist network-level decision making. In this category, LLMs are used to handle tasks such as power allocation, resource allocation, and protocol understanding by interpreting task descriptions provided in natural language~\cite{shao2024wirelessllm, noh2025adaptive}.
Despite these advances, these approaches rely only on physical-layer information and leave the structure of the transmitted data unexploited.

Channel estimation accuracy fundamentally limits the reliability of multiple-input multiple-output (MIMO) systems, motivating extensive research beyond conventional pilot-based estimation. One direction applies deep learning to improve estimation accuracy directly from received signals. For example,~\cite{soltani2019deep} treats the time–frequency channel response as an image and applies super-resolution and denoising networks, and~\cite{hu2021deep} further analyzes such estimators, showing that a ReLU network asymptotically approaches the minimum mean-square error estimator. Data-aided channel estimation offers a complementary approach that reuses the decoded data as additional references, reducing pilot overhead~\cite{zhao2008iterative, he2020channel}. However, erroneous detected symbols admitted as virtual pilots degrade the channel estimate. Consequently, subsequent work focuses on selecting only the reliable subset~\cite{park2015iterative, weisser2024data, kim2023semi, hashempoor2025deep}.
These approaches implement reliability-based selection in different ways. For instance,~\cite{zhao2008iterative} iteratively refines the channel estimate by combining preamble, pilot, and soft-decoded data symbols in a turbo loop, and~\cite{he2020channel} feeds detected data to the estimator within a joint estimation and detection architecture. \cite{park2015iterative} treats high-reliability data symbols from an iterative detection and decoding scheme as virtual pilots, while~\cite{weisser2024data} exploits the subspace spanned by all received symbols to improve a Gaussian-mixture-model estimator. Recent work has also considered learning-based selection criteria. Reference~\cite{kim2023semi} models reliable symbol selection as a Markov decision process and applies reinforcement learning, while~\cite{hashempoor2025deep} uses a classifier trained on denoised received signals to identify reliably detected symbols.
In all such methods, however, the selection criterion is derived exclusively from physical-layer statistics. As a result, these methods can only retain the symbols that the detector already decoded correctly, with no mechanism to recover the misdetected ones.

To address these limitations, we propose a novel semantic-aware channel estimation framework that actively integrates the semantic information of the text payload utilizing an LLM. At the receiver, channel impairments manifest as typographical errors in the decoded text. These errors can be identified using the surrounding context, enabling the recovery of the corrupted characters. Based on this property, we use an LLM to verify and correct the decoded text. Consequently, this approach shifts the data-aided estimation paradigm from simply selecting reliable physical-layer detections to actively recovering misdetected symbols.

The overall procedure of the proposed framework is organized into consecutive steps. First, the base station (BS) obtains an initially decoded text using a conventional pilot-based channel estimate. To address the typographical errors within this text, a fine-tuned LLM deployed at the BS generates a corrected sequence.The LLM operates at the token level, which often results in length mismatches between the original and corrected texts. To resolve this issue, a Needleman-Wunsch (NW) sequence alignment module~\cite{needleman1970general} is applied to establish a strict character-level correspondence.
Based on this aligned text, the system constructs a \textit{semantic pilot} through a two-layer selection process. Specifically, Layer 1 identifies reliable decoded symbols by selecting characters that remain unchanged after the LLM correction. Layer 2 then selects accurately LLM-corrected symbols by validating the modifications against the received signal using physical-layer information. Ultimately, the symbols selected from both layers are utilized as additional pilots, combining with the original pilots to refine the channel estimate.

Building upon our preliminary conference publication~\cite{park2026semantic} that proposed LLM-based symbol verification for single-input single-output systems, this paper generalizes the framework to MIMO environments. Furthermore, we introduce fundamental methodological advancements not addressed in the prior work, including active LLM-driven error recovery and sequence alignment. The main contributions of this paper are summarized as follows:

\begin{itemize}
    \item \textbf{LLM-assisted framework with correction.} Unlike conventional methods that only select reliable detections, our framework exploits LLM-based text correction to actively recover misdetected symbols, introducing payload semantics as a new source of side information.

    \item \textbf{Two-layer selection mechanism.} To prevent LLM hallucinations from degrading the channel estimate, we introduce a two-layer selection mechanism. This framework integrates semantic verification in Layer 1 with physical-layer cross-validation in Layer 2, ensuring that only accurately corrected symbols are employed as additional pilots.

    \item \textbf{Sequence alignment for token-level length mismatch.} We incorporate a NW alignment module to resolve length discrepancies between the initially decoded and LLM-corrected texts. This guarantees a strict character-level correspondence necessary for accurate symbol-level selection.

    \item \textbf{Theoretical analysis of semantic pilot effectiveness.} We provide a rigorous mathematical analysis to establish the fundamental advantage of the proposed approach. Specifically, we prove that an LLM-corrected symbol yields a strictly greater expected reduction in estimation error than an initially correctly decoded symbol.
\end{itemize}

\begin{figure*}[t!]
    \centering
    \includegraphics[width=17.5cm]{figures/system_model.pdf}
    \caption{System model of the proposed semantic-aware data-aided channel estimation with an LLM.}
    \label{fig:system_model}
\end{figure*}

The remainder of this paper is organized as follows. Section~\ref{sec:system_model} introduces the system model and details the pipeline for semantic-aware channel estimation. Section~\ref{sec:proposed} presents the proposed framework, including the LLM-based text correction module, the NW alignment procedure, and the two-layer symbol selection mechanism. Section~\ref{sec:theoretical_analysis} provides a theoretical analysis of the semantic pilot to demonstrate the effectiveness of the corrected symbols. Section~\ref{sec:setup} details the simulation setup, Section~\ref{sec:results} reports the simulation results and ablation studies, and Section~\ref{sec:conclusion} concludes the paper.

\section{System Model}
\label{sec:system_model}

We consider an uplink MIMO system in which a user equipment (UE) equipped with $N_\mathrm{t}$ transmit antennas communicates with a BS equipped with $N_\mathrm{r}$ receive antennas. As illustrated in Fig.~\ref{fig:system_model}, the UE transmits a text payload to the BS through a physical channel. At the BS, a pilot-based initial channel estimate is obtained from a known pilot sequence and used to detect the received signal. Due to imperfect channel estimation and additive noise, the initially decoded text typically contains typographical errors. To address these errors, the BS applies an LLM to produce a corrected version of the decoded text. By leveraging the LLM-corrected text, the proposed method identifies reliable symbols in decoded sequence and selects accurately corrected symbols from the LLM output. These symbols, referred to as the semantic pilot, are used together with the original pilot to refine the initial channel estimate via data-aided channel estimation.

\subsection{Transmission Model}
The UE transmits a text payload to the BS, represented as a character sequence
\begin{equation}
    \mathbf{t} = \left[t^{(1)}, t^{(2)}, \ldots, t^{(M)}\right],
\end{equation}
where $t^{(i)}$ denotes the $i$-th character and $M$ is the length of the text. Each character $t^{(i)}$ is encoded into a bit sequence via source coding, and the resulting bits are mapped to modulation symbols drawn  from the constellation $\mathcal{X}$. The symbol stream is demultiplexed across the $N_t$ transmit antennas, forming the data symbol block $\mathbf{X}_d \in \mathcal{X}^{N_t \times L_d}$, where $L_d$ is the number of data time slots.
A pilot block $\mathbf{X}_\mathrm{p} \in \mathcal{X}^{N_\mathrm{t} \times L_\mathrm{p}}$ of length $L_\mathrm{p}$ is placed before the data block, giving the overall transmitted signal $\mathbf{X} = [\mathbf{X}_\mathrm{p}, \mathbf{X}_\mathrm{d}] \in \mathcal{X}^{N_\mathrm{t} \times (L_\mathrm{p} + L_\mathrm{d})}$.

The corresponding received signal at the BS is written as
\begin{equation}
    \mathbf{Y} = \mathbf{H}\mathbf{X} + \mathbf{N},
    \label{eq:received_signal}
\end{equation}
where $\mathbf{H} \in \mathbb{C}^{N_\mathrm{r} \times N_\mathrm{t}}$ is the MIMO channel matrix between the UE and the BS, and $\mathbf{N} \sim \mathcal{CN}(\mathbf{0}, \sigma^{2}\mathbf{I})$ is additive white Gaussian noise (AWGN). Splitting $\mathbf{Y}$ into pilot and data parts gives $\mathbf{Y} = [\mathbf{Y}_\mathrm{p}, \mathbf{Y}_\mathrm{d}]$, where $\mathbf{Y}_\mathrm{p} = \mathbf{H}\mathbf{X}_\mathrm{p} + \mathbf{N}_\mathrm{p}$ and $\mathbf{Y}_\mathrm{d} = \mathbf{H}\mathbf{X}_\mathrm{d} + \mathbf{N}_\mathrm{d}$ denote the received pilot and data blocks, respectively.

\subsection{Initial Channel Estimation and Detection}
\label{subsec:initial_estimation}
Using the pilot block $(\mathbf{X}_\mathrm{p}, \mathbf{Y}_\mathrm{p})$, the BS obtains an initial channel estimate via the linear minimum mean square error (LMMSE) estimator. The channel estimate is given by
\begin{equation}
    \hat{\mathbf{H}}_\mathrm{p}
    = \mathbf{Y}_\mathrm{p}\mathbf{X}_\mathrm{p}^{H}
    \left(\mathbf{X}_\mathrm{p}\mathbf{X}_\mathrm{p}^{H}
    + \sigma^{2}\mathbf{I}_{N_\mathrm{t}}\right)^{-1}.
    \label{eq:lmmse}
\end{equation}
The estimate $\hat{\mathbf{H}}_\mathrm{p}$ is subsequently used to detect the data symbols using maximum-a-posteriori-probability (MAP) detection. Let $\mathbf{x}_{1}, \ldots, \mathbf{x}_{K}$ denote the $K = |\mathcal{X}|^{N_\mathrm{t}}$ possible per-slot symbol vectors. Under equal priors and Gaussian noise, \begin{equation}
    \hat{\mathbf{x}}[n]
    = \arg\min_{k \in \{1,\ldots,K\}}
    \left\|\mathbf{y}[n]
    - \hat{\mathbf{H}}_\mathrm{p}\mathbf{x}_{k}\right\|^{2},
    \quad n = 1, \ldots, L_\mathrm{d},
    \label{eq:map}
\end{equation}
where $\mathbf{y}[n]$ is the $n$-th column of $\mathbf{Y}_\mathrm{d}$. The collection of all detected symbol vectors is denoted by
$\hat{\mathbf{X}}_\mathrm{d} = [\hat{\mathbf{x}}[1], \ldots, \hat{\mathbf{x}}[L_\mathrm{d}]] \in \mathcal{X}^{N_\mathrm{t} \times L_\mathrm{d}}$. The detected symbols are demodulated and source-decoded into the initially decoded text sequence
\begin{equation}
    \hat{\mathbf{t}}
    = \left[\hat{t}^{(1)}, \hat{t}^{(2)}, \ldots,
    \hat{t}^{(M)}\right].
\end{equation}
In this text sequence $\hat{\mathbf{t}}$, physical-layer errors typically appear as typographical errors.

\subsection{LLM-Based Semantic Pilot Selection}
\label{subsec:semantic_pilot}
The BS feeds the initially decoded text $\hat{\mathbf{t}}$ into an LLM deployed on the BS, which returns a corrected text sequence
\begin{equation}
    \hat{\mathbf{t}}_\mathrm{LLM}
    = \left[\hat{t}^{(1)}_\mathrm{LLM},
    \hat{t}^{(2)}_\mathrm{LLM}, \ldots,
    \hat{t}^{(M')}_\mathrm{LLM}\right],
\end{equation}
where $M'$ may differ from $M$ due to tokenization artifacts of the LLM. To resolve this length mismatch and obtain a character-level correspondence between $\hat{\mathbf{t}}$ and $\hat{\mathbf{t}}_\mathrm{LLM}$, an alignment function $\mathcal{A}$ is applied
\begin{equation}
    \bar{\mathbf{t}}, \bar{\mathbf{t}}_\mathrm{LLM}
    = \mathcal{A}\!\left(\hat{\mathbf{t}},
    \hat{\mathbf{t}}_\mathrm{LLM}\right),
\end{equation}
which returns two aligned sequences of equal length $\bar{M}$
\begin{align}
    \bar{\mathbf{t}}
    &= \left[\bar{t}^{(1)}, \bar{t}^{(2)}, \ldots,
    \bar{t}^{(\bar{M})}\right], \\
    \bar{\mathbf{t}}_\mathrm{LLM}
    &= \left[\bar{t}^{(1)}_\mathrm{LLM},
    \bar{t}^{(2)}_\mathrm{LLM}, \ldots,
    \bar{t}^{(\bar{M})}_\mathrm{LLM}\right],
\end{align}
where each $\bar{t}^{(i)}$ and $\bar{t}^{(i)}_\mathrm{LLM}$ is either a character or the gap symbol ``$-$'' inserted to compensate for insertions or deletions between $\hat{\mathbf{t}}$ and $\hat{\mathbf{t}}_\mathrm{LLM}$. For example, the decoded substring ``\texttt{awound}'' corrected by the LLM to ``\texttt{about}'' is aligned as ``\texttt{awound}'' and ``\texttt{abou-t}'', so that matched positions can be compared character by character. Based on the aligned pairs, a semantic pilot selection function $\mathcal{S}$ identifies two types of reliable symbols: reliable decoded symbols and reliable LLM-corrected symbols. The selected symbols are written as
\begin{equation}
    \mathbf{X}_\mathrm{s}
    = \mathcal{S}\!\left(\bar{\mathbf{t}},\;
    \bar{\mathbf{t}}_\mathrm{LLM},\;
    \mathbf{Y},\;
    \mathbf{X}_\mathrm{p}\right).
    \label{eq:semantic_pilot}
\end{equation}
We refer to $\mathbf{X}_\mathrm{s}$ as the semantic pilot. The detailed design of $\mathcal{A}$ and $\mathcal{S}$ is presented in Section~\ref{sec:proposed}.

\subsection{Data-Aided Channel Estimation}
\label{subsec:refinement}
By combining the semantic pilot $\mathbf{X}_\mathrm{s}$ and its corresponding received signal $\mathbf{Y}_\mathrm{s}$ with the original pilot block, we refine the channel estimate using the LMMSE estimator as follows
\begin{equation}
    \hat{\mathbf{H}}_\mathrm{LLM}
    = \left(\mathbf{Y}_\mathrm{p}\mathbf{X}_\mathrm{p}^{H}
    + \mathbf{Y}_\mathrm{s}\mathbf{X}_\mathrm{s}^{H}\right)
    \left(\mathbf{X}_\mathrm{p}\mathbf{X}_\mathrm{p}^{H}
    + \mathbf{X}_\mathrm{s}\mathbf{X}_\mathrm{s}^{H}
    + \sigma^{2}\mathbf{I}_{N_\mathrm{t}}\right)^{-1}.
    \label{eq:refined_estimation}
\end{equation}
The resulting refined channel estimate $\hat{\mathbf{H}}_\mathrm{LLM}$ serves as the final output of the proposed framework.

\section{Proposed Framework}
\label{sec:proposed}

\begin{figure}[t]
    \centering
    \includegraphics[width=0.9\linewidth]{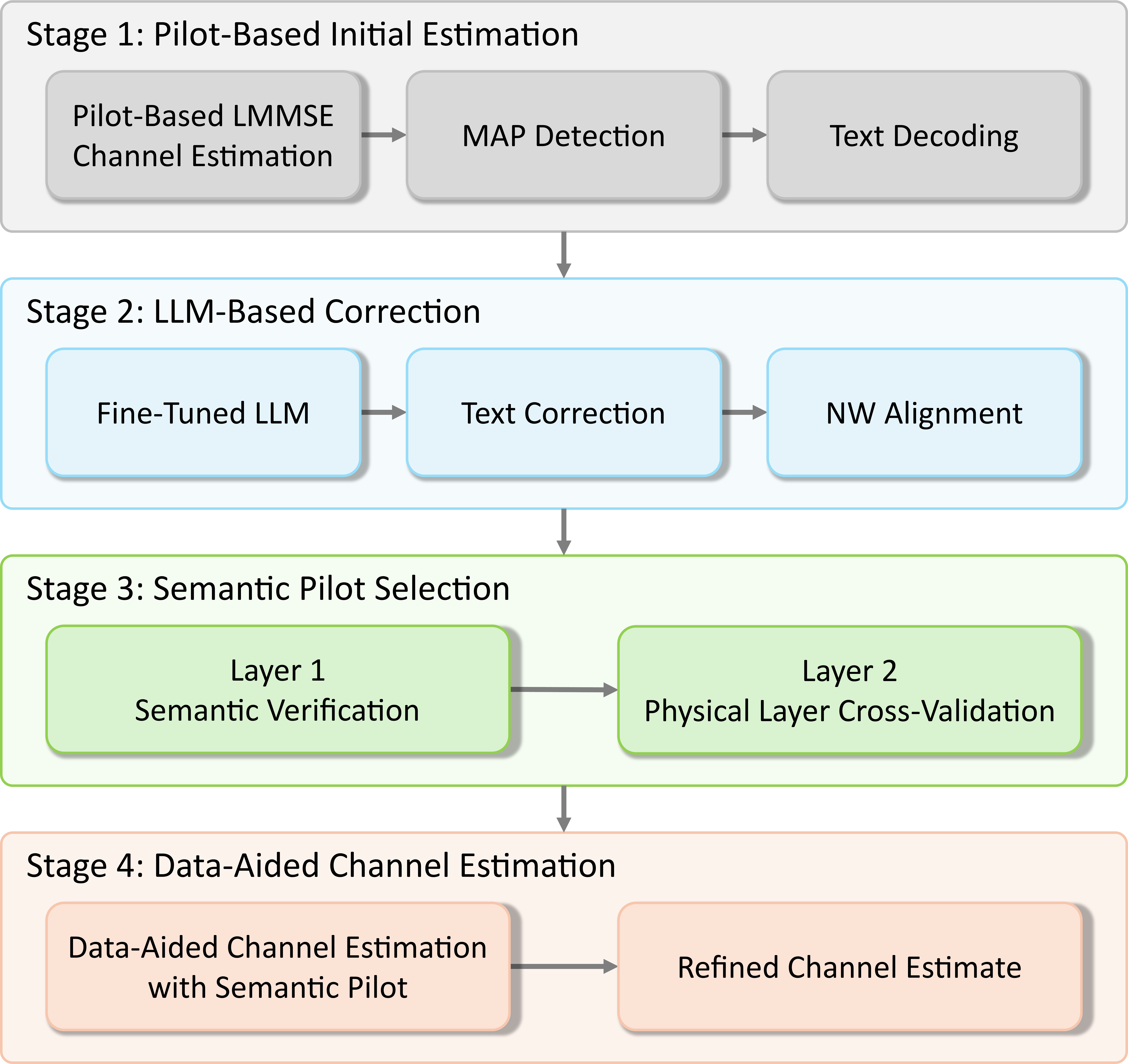}
    \caption{Overall block diagram of the proposed semantic-aware data-aided channel estimation framework.}
    \label{fig:flowchart}
\end{figure}

This section presents the detailed design of the proposed framework. The framework consists of three stages: LLM-based text correction of the initially decoded text, sequence alignment to resolve the length mismatch between the decoded and corrected texts, and semantic pilot selection. An overview of the overall pipeline is illustrated in Fig.~\ref{fig:flowchart}. The remainder of this section describes each stage in detail.

\subsection{LLM-Based Text Correction}
\label{subsec:llm_correction}

To mitigate textual corruptions in the initially decoded sequence $\hat{\mathbf{t}}$, the BS employs an LLM optimized to correct corrupted text. The LLM uses contextual dependencies to identify and correct typographical errors caused by imperfect channel estimation and noise. We adopt the open-weight Qwen3-8B model~\cite{yang2025qwen3} as the base model, adapting it to this specific error-correction task through the fine-tuning procedure described below.

\subsubsection{Prompt Design}
The LLM is provided with a task-specific prompt that instructs it to return a corrected version of the decoded text under a substitution-only constraint. As shown in Fig.~\ref{fig:prompt}, the prompt supplies the decoded text together with its length, where the placeholder \texttt{\{M\}} is replaced by the length of $\hat{\mathbf{t}}$ and \texttt{\{Decoded Text\}} by $\hat{\mathbf{t}}$.

\begin{figure}[h]
\centering
\begin{tcolorbox}[width=0.9\columnwidth, colback=gray!5, colframe=blue!50!black,
                  title=Prompt Template, fonttitle=\bfseries]
\small
\textbf{System message:}
\begin{verbatim}
You are correcting text corrupted by 
channel noise. Each character is a 
printable ASCII character. Noise may 
substitute characters with incorrect 
ones.
\end{verbatim}
\textbf{User prompt:}
\begin{verbatim}
Correct this noise-corrupted text. 
Substitution only, no insertion or 
deletion. Output must be exactly 
{M} characters.

Input: {Decoded Text}
\end{verbatim}
\end{tcolorbox}
\caption{Prompt template used for LLM-based text correction.}
\label{fig:prompt}
\end{figure}

\subsubsection{LLM Fine-Tuning}
To tailor the LLM for correcting channel-induced typographical errors, we fine-tune it on a training corpus generated from the end-to-end transmission pipeline in Section~\ref{sec:system_model}. Each training sample pairs a source text with its corresponding decoded output from the pipeline. The corrupted text is utilized as the LLM input, with the original source text serving as the ground-truth target. Generating these samples across multiple signal-to-noise ratio (SNR) levels exposes the LLM to varying error rates and ensures robust performance under various operating conditions.

Let $\mathbf{u} = [u_1, u_2, \ldots, u_{|\mathbf{u}|}]$ denote the input token sequence constructed by inserting a corrupted text into the prompt template, and let $\mathbf{v} = [v_1, v_2, \ldots, v_{|\mathbf{v}|}]$ denote the target token sequence, i.e., the tokenized uncorrupted source text. The fine-tuning loss is defined as the token-level cross-entropy applied only to the target tokens, given by
\begin{equation}
    \mathcal{L}
    = -\frac{1}{|\mathbf{v}|}
    \sum_{j=1}^{|\mathbf{v}|}
    \log p_{\boldsymbol{\theta}}\!
    \left(v_j \,\middle|\, \mathbf{u}, v_{1:j-1}\right),
    \label{eq:ft_loss}
\end{equation}
where $p_{\boldsymbol{\theta}}(\cdot)$ is the conditional distribution predicted by the LLM and $v_{1:j-1} = [v_1, \ldots, v_{j-1}]$ denotes the previously generated target tokens. The trainable parameters $\boldsymbol{\theta}$ correspond to the low-rank adapters introduced by LoRA~\cite{hu2022lora}, while the remaining pre-trained weights are frozen in quantized form following QLoRA~\cite{dettmers2023qlora}. By excluding the prompt tokens $\mathbf{u}$ from the loss calculation, we ensure the model focuses on generating the corrected text rather than repeating the prompt.

\subsection{Sequence Alignment via Needleman--Wunsch Algorithm}

This subsection describes the alignment function $\mathcal{A}$ introduced in Section~\ref{subsec:semantic_pilot}. LLMs operate on tokens rather than individual characters. Since tokens typically represent entire words, the corrections of LLMs manifest as word-based modifications. These modifications induce a length discrepancy between $\hat{\mathbf{t}}_\mathrm{LLM}$ and $\hat{\mathbf{t}}$. In the symbol domain, this mismatch introduces further errors instead of rectifying the initial errors. Therefore, aligning the two sequences is essential to maintain the corrective benefits of the LLM.

To achieve this alignment, we adopt the NW algorithm~\cite{needleman1970general}, a dynamic programming method that synchronizes the two sequences by inserting gaps to account for any added or removed characters. This process establishes a character-wise mapping that ensures corresponding characters are correctly paired despite the differences in sequence length. The algorithm identifies the optimal alignment by maximizing an alignment score, which rewards character matches while penalizing mismatches and gap insertions. Specifically, we utilize an affine gap penalty model that penalizes starting a new gap more strictly than extending a gap. This configuration encourages the algorithm to group gaps together rather than creating multiple separate gaps. This design targets word-level corrections, since replacing a single word can modify multiple consecutive characters.

The NW algorithm with affine gap penalty maintains three score matrices. For each position $(i, j)$, the match/mismatch matrix $\mathbf{D}$, the vertical gap matrix $\mathbf{P}$, and the horizontal gap matrix $\mathbf{Q}$ are updated as
\begin{align}
    D_{i,j} &= s_{i,j} + \max\!\big(
        D_{i-1,j-1},\; P_{i-1,j-1},\; Q_{i-1,j-1}
    \big), \label{eq:nw_M} \\
    P_{i,j} &= \max\!\big(
        D_{i-1,j} + g_\mathrm{o},\;
        P_{i-1,j} + g_\mathrm{e}
    \big), \label{eq:nw_X} \\
    Q_{i,j} &= \max\!\big(
        D_{i,j-1} + g_\mathrm{o},\;
        Q_{i,j-1} + g_\mathrm{e}
    \big), \label{eq:nw_Y}
\end{align}
where $s_{i,j}$ is the score for aligning the $i$-th decoded character with the $j$-th corrected character, defined as
\begin{equation}
    s_{i,j} = \begin{cases}
        s_\mathrm{match} & \text{if } \hat{t}^{(i)} =
            \hat{t}^{(j)}_\mathrm{LLM}, \\
        s_\mathrm{mis}   & \text{otherwise},
    \end{cases}
\end{equation}
with $s_\mathrm{match}$ and $s_\mathrm{mis}$ denoting the match reward and mismatch penalty, respectively. The parameters $g_\mathrm{o}$ and $g_\mathrm{e}$ denote the gap open penalty and gap extend penalty, respectively. The optimal score at position $(i, j)$ is given by
\begin{equation}
    F_{i,j} = \max\!\big(
        D_{i,j},\; P_{i,j},\; Q_{i,j}
    \big).
\end{equation}
Intuitively, $D_{i,j}$ represents a diagonal move that aligns $\hat{t}^{(i)}$ with $\hat{t}^{(j)}_\mathrm{LLM}$, $P_{i,j}$ represents a vertical move that inserts a gap in the corrected sequence, and $Q_{i,j}$ represents a horizontal move that inserts a gap in the decoded sequence. After filling the score matrices, the algorithm traces back from the bottom-right corner of the table to reconstruct the optimal alignment.

As an example, Fig.~\ref{fig:nw_algorithm} illustrates the alignment of the decoded substring ``\texttt{awound}'' and the LLM-corrected substring ``\texttt{about}''. Despite the length mismatch, the algorithm produces equal-length aligned sequences ``\texttt{awound}'' and ``\texttt{abou-t}'' by inserting a gap, maximizing the number of correctly matched character positions. This character-by-character alignment ensures that the subsequent symbol selection step processes correctly matched pairs.

\label{subsec:alignment}
\begin{figure}[t]
    \centering
    \includegraphics[width=0.75\linewidth]{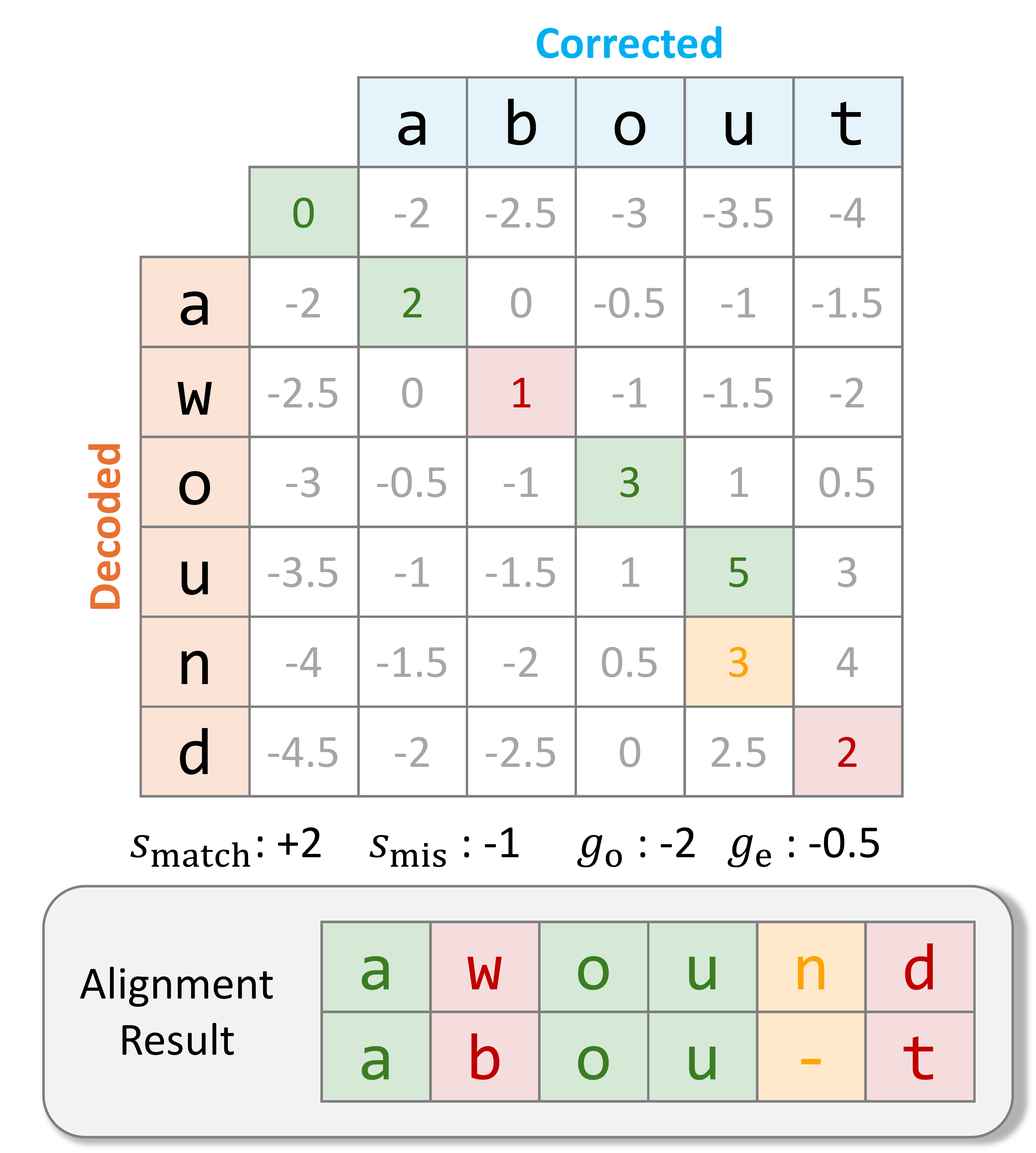}
    \caption{An example of the NW algorithm between the decoded substring `awound' and the LLM-corrected substring `about'.}
    \label{fig:nw_algorithm}
\end{figure}

\subsection{Semantic Pilot Selection}
\label{subsec:selection}

Given the aligned sequences $\bar{\mathbf{t}}$ and $\bar{\mathbf{t}}_\mathrm{LLM}$, the next step is to construct the semantic pilot by selecting reliable symbols from the data block. This subsection describes the selection function $\mathcal{S}$ defined in Section~\ref{subsec:semantic_pilot}. The selection is performed through a two-layer mechanism, as illustrated in Fig.~\ref{fig:semantic_pilot}. Layer~1 identifies reliable decoded symbols by examining the semantic agreement between the decoded and corrected texts. Layer~2 further exploits LLM corrections to obtain new reference symbols that are unavailable from physical-layer processing alone. Before incorporating LLM-based corrections into the semantic pilot, the proposed method cross-validates each correction using a physical-layer likelihood-ratio test to ensure consistency with the observed channel output. The overall procedure is summarized in Algorithm~\ref{alg:selection}, and an example of the resulting decision classes is illustrated in Fig.~\ref{fig:pilot_selection}.

\subsubsection{Layer 1 -- Semantic Verification}
For each character position $i$ in the aligned sequences $\bar{\mathbf{t}}$ and $\bar{\mathbf{t}}_\mathrm{LLM}$, the decoded and corrected characters are compared. Since the LLM is trained to correct errors, retaining a character without modification indicates that the character is consistent with its context and has been correctly decoded. Such positions are classified as
\emph{Verified} when
\begin{equation}
    \bar{t}^{(i)} = \bar{t}^{(i)}_\mathrm{LLM},
    \label{eq:layer1}
\end{equation}
where the corresponding MAP-detected symbols are subsequently included in the semantic pilot.
Let $\mathcal{I}_\mathrm{V}$ denote the set of all verified positions. The verified symbols and their corresponding received signals are collected as
\begin{align}
    \mathbf{X}_\mathrm{V}
    &= \left\{\hat{\mathbf{x}}[n]
    : n \in \mathcal{N}_i,\;
    i \in \mathcal{I}_\mathrm{V}\right\},
    \label{eq:verified_set} \\
    \mathbf{Y}_\mathrm{V}
    &= \left\{\mathbf{y}[n]
    : n \in \mathcal{N}_i,\;
    i \in \mathcal{I}_\mathrm{V}\right\},
    \label{eq:verified_recv}
\end{align}
where $\mathcal{N}_i$ denotes the set of symbol slots associated with position $i$. These verified symbols, together with the original pilot, are used to compute an intermediate refined channel estimate via the LMMSE estimator as
\begin{equation}
    \hat{\mathbf{H}}_\mathrm{V}
    = \left(\mathbf{Y}_\mathrm{p}\mathbf{X}_\mathrm{p}^{H}
    + \mathbf{Y}_\mathrm{V}\mathbf{X}_\mathrm{V}^{H}\right)
    \left(\mathbf{X}_\mathrm{p}\mathbf{X}_\mathrm{p}^{H}
    + \mathbf{X}_\mathrm{V}\mathbf{X}_\mathrm{V}^{H}
    + \sigma^{2}\mathbf{I}_{N_\mathrm{t}}\right)^{-1}.
    \label{eq:H_V}
\end{equation}
This intermediate estimate is used for physical-layer cross-validation.

\begin{figure}[t]
    \centering
    \includegraphics[width=0.9\linewidth]{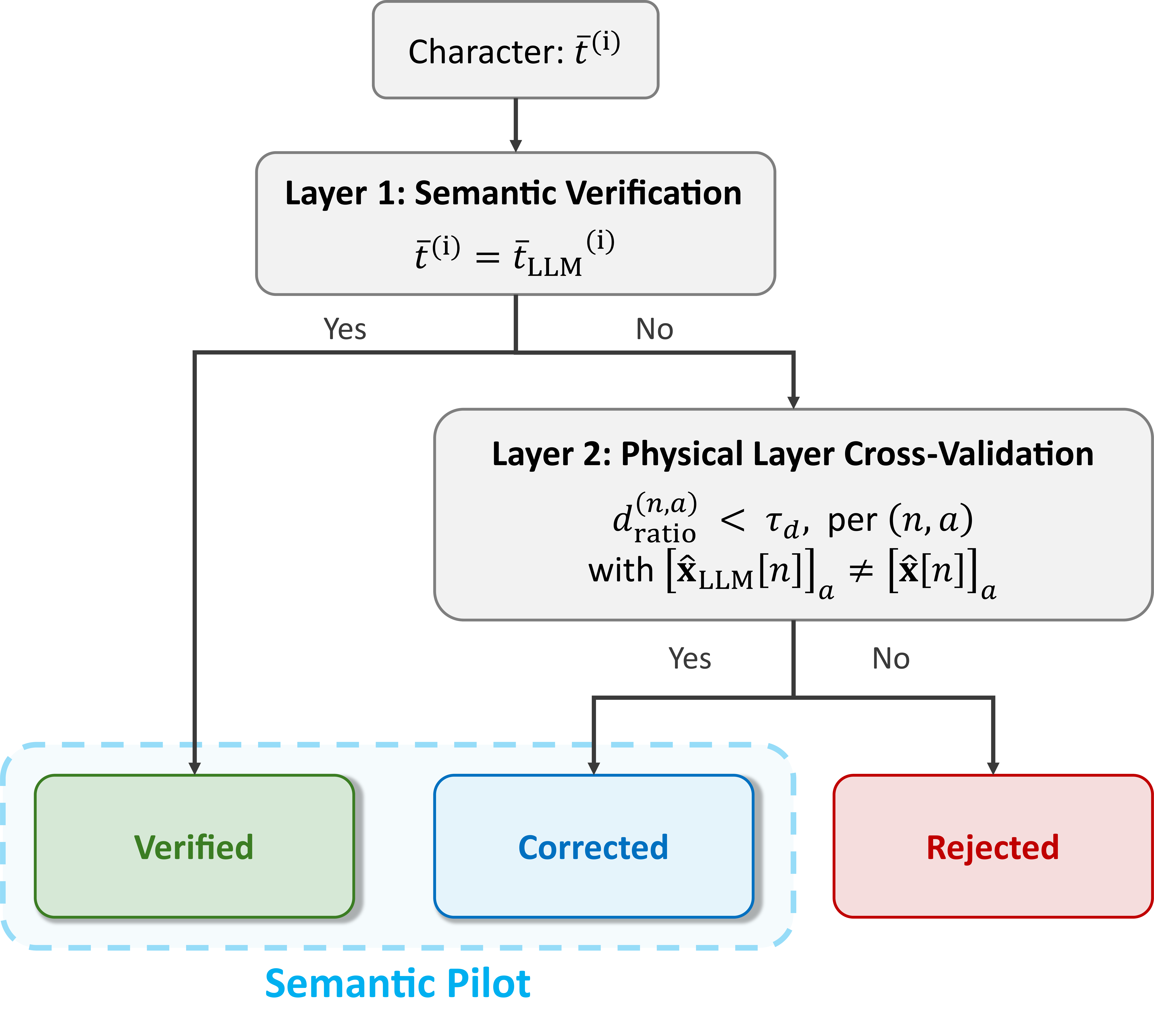}
    \caption{Flowchart of the proposed two-layer semantic pilot selection mechanism.}
    \label{fig:semantic_pilot}
\end{figure}

\begin{algorithm}[t]
\caption{Semantic Pilot Selection}
\label{alg:selection}
\begin{algorithmic}[1]
\Input $\bar{\mathbf{t}}$, $\bar{\mathbf{t}}_\mathrm{LLM}$,
$\mathbf{X}_\mathrm{p}$, $\mathbf{Y}$
\Statex \textbf{Layer 1: Semantic Verification}
\State $\mathcal{I}_\mathrm{V} \gets \{i :
\bar{t}^{(i)} = \bar{t}^{(i)}_\mathrm{LLM}\}$
\State Collect $(\mathbf{X}_\mathrm{V},
\mathbf{Y}_\mathrm{V})$ at positions
$\mathcal{I}_\mathrm{V}$ using
$\hat{\mathbf{X}}_\mathrm{d}$
\State $\hat{\mathbf{H}}_\mathrm{V} \gets \mathrm{LMMSE}
([\mathbf{X}_\mathrm{p}, \mathbf{X}_\mathrm{V}],\;
[\mathbf{Y}_\mathrm{p}, \mathbf{Y}_\mathrm{V}])$
\Statex \textbf{Layer 2: Physical-Layer Cross-Validation}
\State $\mathcal{S}_\mathrm{C} \gets \emptyset$
\For{$i$ with $\bar{t}^{(i)} \neq \bar{t}^{(i)}_\mathrm{LLM}$}
    \For{$n \in \mathcal{N}_i$ and $a \in \{1,\dots,N_t\}$
    with $[\hat{\mathbf{x}}_\mathrm{LLM}[n]]_a \neq [\hat{\mathbf{x}}[n]]_a$}
        \State Compute $d_\mathrm{ratio}^{(n,a)}$ via
        \eqref{eq:d_ratio}
        \If{$d_\mathrm{ratio}^{(n,a)} < \tau_d$}
            \State $\mathcal{S}_\mathrm{C} \gets
            \mathcal{S}_\mathrm{C} \cup \{(n, a)\}$
        \EndIf
    \EndFor
\EndFor
\State $\mathbf{x}_\mathrm{C}[n] \gets$ \eqref{eq:composite},
$\;\forall n$ with $\exists\, a \text{ s.t. } (n, a) \in \mathcal{S}_\mathrm{C}$
\State $\mathbf{X}_\mathrm{C} \gets \{\mathbf{x}_\mathrm{C}[n]\}$,
$\;\mathbf{Y}_\mathrm{C} \gets \{\mathbf{y}[n]\}$
\Output $\mathbf{X}_\mathrm{s} \gets [\mathbf{X}_\mathrm{V},
\mathbf{X}_\mathrm{C}]$
\end{algorithmic}
\end{algorithm}

\subsubsection{Layer 2 -- Physical-Layer Cross-Validation}

Although LLM modifications can recover transmitted symbols where the MAP detector fails, inaccurate corrections degrade the channel estimation accuracy. Consequently, the system must carefully filter these candidates before use. By cross-validating each corrected symbol against the received signal, the framework ensures reliability. Only the symbols that pass the cross-validation are classified as \emph{Corrected} and included in the semantic pilot.

For a symbol slot $n$ associated with position $i$, the vector $\hat{\mathbf{x}}_\mathrm{LLM}[n]$ is obtained by re-modulating the LLM-corrected character $\bar{t}^{(i)}_\mathrm{LLM}$. To evaluate this correction, a candidate vector $\tilde{\mathbf{x}}^{(n,a)}$ is constructed by replacing the $a$-th entry of the MAP-detected vector $\hat{\mathbf{x}}[n]$ with the corresponding entry from $\hat{\mathbf{x}}_\mathrm{LLM}[n]$. The cross-validation is applied only to entries where the two vectors differ, i.e., $[\hat{\mathbf{x}}_\mathrm{LLM}[n]]_a \neq
[\hat{\mathbf{x}}[n]]_a$. Unchanged entries are excluded from this process, as they provide no additional information for channel refinement. For each such entry, we define the residual ratio as
\begin{equation}
    d_\mathrm{ratio}^{(n,a)}
    = \frac{\left\|\mathbf{y}[n] - \hat{\mathbf{H}}_\mathrm{V}\,
    \tilde{\mathbf{x}}^{(n,a)}\right\|^{2}}
    {\left\|\mathbf{y}[n] - \hat{\mathbf{H}}_\mathrm{V}\,
    \hat{\mathbf{x}}[n]\right\|^{2}}.
    \label{eq:d_ratio}
\end{equation}
The residual ratio $d_\mathrm{ratio}^{(n,a)}$ quantifies the relative fit of the LLM-corrected symbol to the observed signal $\mathbf{y}[n]$ compared to the MAP-detected baseline. This metric serves as a criterion for assessing the validity of each correction, providing a basis for selecting reliable symbols for the semantic pilot set.

Each symbol-antenna pair is independently classified as
\begin{equation}
    (n, a) \in
    \begin{cases}
        \mathcal{S}_\mathrm{C}
        & \text{if } [\hat{\mathbf{x}}_\mathrm{LLM}[n]]_a \neq [\hat{\mathbf{x}}[n]]_a
        \text{ and } d_\mathrm{ratio}^{(n,a)} < \tau_d, \\
        \mathcal{S}_\mathrm{R}
        & \text{otherwise},
    \end{cases}
    \label{eq:d_ratio_test}
\end{equation}
where $\mathcal{S}_\mathrm{C}$ and $\mathcal{S}_\mathrm{R}$ denote the sets of accepted and rejected symbol-antenna pairs, respectively.
For each symbol slot $n$ that contains at least one accepted entry, a reference vector $\mathbf{x}_\mathrm{C}[n]$ is constructed by combining the accepted LLM-corrected symbols and the MAP-detected symbols within the same slot. The $a$-th entry of $\mathbf{x}_\mathrm{C}[n]$ is given by
\begin{equation}
    [\mathbf{x}_\mathrm{C}[n]]_a =
    \begin{cases}
        [\hat{\mathbf{x}}_\mathrm{LLM}[n]]_a
        & \text{if } (n, a) \in \mathcal{S}_\mathrm{C}, \\
        [\hat{\mathbf{x}}[n]]_a
        & \text{otherwise}.
    \end{cases}
    \label{eq:composite}
\end{equation}
The corrected symbols and their corresponding received signals are then collected as
\begin{align}
    \mathbf{X}_\mathrm{C}
    &= \left\{\mathbf{x}_\mathrm{C}[n]
    : \exists\, a \text{ s.t. } (n, a) \in
    \mathcal{S}_\mathrm{C}\right\},
    \label{eq:corrected_set} \\
    \mathbf{Y}_\mathrm{C}
    &= \left\{\mathbf{y}[n]
    : \exists\, a \text{ s.t. } (n, a) \in
    \mathcal{S}_\mathrm{C}\right\}.
    \label{eq:corrected_recv}
\end{align}
The final semantic pilot is then constructed as $\mathbf{X}_\mathrm{s} = [\mathbf{X}_\mathrm{V}, \mathbf{X}_\mathrm{C}]$ with the corresponding received signal $\mathbf{Y}_\mathrm{s} = [\mathbf{Y}_\mathrm{V}, \mathbf{Y}_\mathrm{C}]$. The refined channel estimate $\hat{\mathbf{H}}_\mathrm{LLM}$ is then obtained from \eqref{eq:refined_estimation} using $\mathbf{X}_\mathrm{s}$ and $\mathbf{Y}_\mathrm{s}$.

\begin{figure*}[t]
    \centering
    \includegraphics[width=17.5cm]{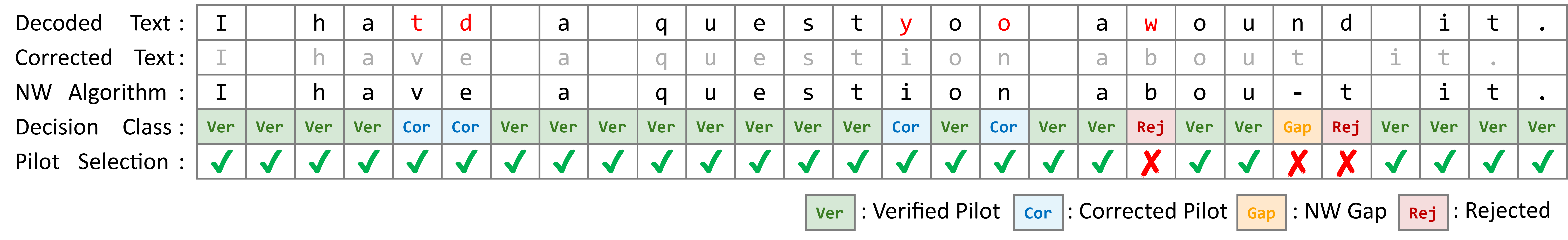}
    \caption{An example of the semantic pilot construction, demonstrating sequence alignment and the resulting decision classes for a sample sentence.}
    \label{fig:pilot_selection}
\end{figure*}

\section{Theoretical Analysis of the Semantic Pilot}
\label{sec:theoretical_analysis}
 
In this section, we show that a corrected symbol, initially misdecoded, contributes a strictly larger expected reduction in channel estimation error than a symbol that the MAP detector decodes correctly. This analysis demonstrates the effectiveness of the corrected symbols in Layer~2.

\subsection{Setup}
\label{subsec:ta_setup}

Let $\hat{\mathbf{H}}_\mathrm{p}$ denote the pilot-only estimate and $\mathbf{E} = \hat{\mathbf{H}}_\mathrm{p} - \mathbf{H}$ its error. Under the orthogonal pilot $\mathbf{X}_\mathrm{p}\mathbf{X}_\mathrm{p}^H = L_\mathrm{p}\mathbf{I}_{N_\mathrm{t}}$, substituting $\mathbf{Y}_\mathrm{p} = \mathbf{H}\mathbf{X}_\mathrm{p} + \mathbf{N}_\mathrm{p}$ into~\eqref{eq:lmmse} yields
\begin{equation}
    \mathbf{E} = -\frac{\sigma^2}{L_\mathrm{p}+\sigma^2}\mathbf{H}
    + \frac{1}{L_\mathrm{p}+\sigma^2}\mathbf{N}_\mathrm{p}\mathbf{X}_\mathrm{p}^H.
    \label{eq:ta_E}
\end{equation}
Since $\mathbf{H}$ and $\mathbf{N}_\mathrm{p}$ are independent, the rows of $\mathbf{E}$ are i.i.d. zero-mean Gaussian with per-row covariance
\begin{equation}
    \mathbf{C}_\mathrm{p} = \alpha\,\mathbf{I}_{N_\mathrm{t}},
    \label{eq:ta_Cp}
\end{equation}
where $\alpha = \sigma^2/(L_\mathrm{p} + \sigma^2)$. A direct consequence of~\eqref{eq:ta_E}, established in Appendix~\ref{app:prop_proof}, is that the pilot-only estimate $\hat{\mathbf{H}}_\mathrm{p}$ and its error $\mathbf{E}$ are statistically independent.

To measure how the pilot-only channel error interacts with a given data $\mathbf{x}$, we define the \emph{realized error alignment} as
\begin{equation}
R(\mathbf{x};\mathbf{E}) = \|\mathbf{E}\mathbf{x}\|^2.
\label{eq:ta_R}
\end{equation}
The vector $\mathbf{E}\mathbf{x}$ represents the discrepancy between the noise-free received signal $\mathbf{H}\mathbf{x}$ and its estimate $\hat{\mathbf{H}}_\mathrm{p}\mathbf{x}$ derived from the pilot-only channel. Consequently, $R(\mathbf{x};\mathbf{E})$ quantifies the signal distortion induced by the estimation error along the direction of $\mathbf{x}$.

We refer to a correctly decoded symbol as $\mathbf{x}_\mathrm{D}$ and to an initially misdecoded but corrected symbol as $\mathbf{x}_\mathrm{C}$, and assume both reference types equal the true transmitted symbol. $\Delta(\mathbf{x})$ denotes the expected reduction in $\|\hat{\mathbf{H}}_\mathrm{new}-\mathbf{H}\|_F^2$ when $\mathbf{x}$ is used as an additional pilot, where $\hat{\mathbf{H}}_\mathrm{new}$ is the resulting refined channel estimate.

\subsection{Per-Symbol Error Reduction}
\label{subsec:ta_lemma}

\begin{lemma}
\label{lem:ta_delta}
For a reference $\mathbf{x}$ equal to the true transmitted symbol with $\|\mathbf{x}\|^2
= N_\mathrm{t}$,
\begin{equation}
\Delta(\mathbf{x}) = a\,\mathbb{E}\!\left[R(\mathbf{x};\mathbf{E})
\,\big|\,\mathbf{x}\right] - b,
\label{eq:ta_delta}
\end{equation}
where
\begin{equation}
a = \frac{2(L_\mathrm{p}+\sigma^2) + N_\mathrm{t}}
        {(L_\mathrm{p}+\sigma^2 + N_\mathrm{t})^2},
\quad
b = \frac{\sigma^2\,N_\mathrm{r}\,N_\mathrm{t}}
        {(L_\mathrm{p}+\sigma^2 + N_\mathrm{t})^2}.
\label{eq:ta_ab}
\end{equation}
In particular, $\Delta(\mathbf{x})$ is a strictly increasing affine function of $\mathbb{E}[R(\mathbf{x};\mathbf{E})\mid\mathbf{x}]$.
\end{lemma}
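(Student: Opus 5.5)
The plan is a direct computation. First I would write the one-step refined estimate as a rank-one update of $\hat{\mathbf{H}}_\mathrm{p}$ via Sherman--Morrison. Then I would expand the squared Frobenius error, and finally average over the data-slot noise.

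\textbf{Step 1: rank-one update.} Let $c = L_\mathrm{p}+\sigma^2$ and use the orthogonal pilot $\mathbf{X}_\mathrm{p}\mathbf{X}_\mathrm{p}^H = L_\mathrm{p}\mathbf{I}_{N_\mathrm{t}}$. Adding one reference $\mathbf{x}$ with observation $\mathbf{y} = \mathbf{H}\mathbf{x}+\mathbf{n}$ to \eqref{eq:refined_estimation} gives
\begin{equation*}
\hat{\mathbf{H}}_\mathrm{new} = \left(c\,\hat{\mathbf{H}}_\mathrm{p} + \mathbf{y}\mathbf{x}^H\right)\mathbf{A}^{-1}, \qquad \mathbf{A} = c\mathbf{I}+\mathbf{x}\mathbf{x}^H.
\end{equation*}
Write $c\hat{\mathbf{H}}_\mathrm{p} = \hat{\mathbf{H}}_\mathrm{p}\mathbf{A} - \hat{\mathbf{H}}_\mathrm{p}\mathbf{x}\mathbf{x}^H$. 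Sherman--Morrison with $\|\mathbf{x}\|^2=N_\mathrm{t}$ gives $\mathbf{x}^H\mathbf{A}^{-1} = \mathbf{x}^H/(c+N_\mathrm{t})$. Setting $\beta = 1/(c+N_\mathrm{t})$, this yields the recursive form
\begin{equation*}
\hat{\mathbf{H}}_\mathrm{new} = \hat{\mathbf{H}}_\mathrm{p} + \beta\,(\mathbf{y}-\hat{\mathbf{H}}_\mathrm{p}\mathbf{x})\mathbf{x}^H.
\end{equation*}
Since $\mathbf{y}-\hat{\mathbf{H}}_\mathrm{p}\mathbf{x} = -\mathbf{E}\mathbf{x}+\mathbf{n}$, the new error is
\begin{equation*}
\hat{\mathbf{H}}_\mathrm{new}-\mathbf{H} = \mathbf{E} - \beta(\mathbf{E}\mathbf{x}-\mathbf{n})\mathbf{x}^H.
\end{equation*}

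\textbf{Step 2: expand the squared error.} Expanding the Frobenius norm gives
\begin{equation*}
\|\mathbf{E}\|_F^2 - 2\beta\,\mathrm{Re}\,\mathrm{tr}\!\left(\mathbf{E}^H(\mathbf{E}\mathbf{x}-\mathbf{n})\mathbf{x}^H\right) + \beta^2\|\mathbf{x}\|^2\,\|\mathbf{E}\mathbf{x}-\mathbf{n}\|^2.
\end{equation*}
The cross trace equals $\|\mathbf{E}\mathbf{x}\|^2 - \mathbf{x}^H\mathbf{E}^H\mathbf{n} = R(\mathbf{x};\mathbf{E}) - \mathbf{x}^H\mathbf{E}^H\mathbf{n}$. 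Similarly, $\|\mathbf{E}\mathbf{x}-\mathbf{n}\|^2 = R - 2\,\mathrm{Re}(\mathbf{n}^H\mathbf{E}\mathbf{x}) + \|\mathbf{n}\|^2$.

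\textbf{Step 3: average over the data noise.} Take expectations conditioned on $\mathbf{x}$. This uses that $\mathbf{n}\sim\mathcal{CN}(\mathbf{0},\sigma^2\mathbf{I}_{N_\mathrm{r}})$ is zero-mean and independent of $(\mathbf{E},\mathbf{x})$, because $\mathbf{E}$ depends only on $(\mathbf{H},\mathbf{N}_\mathrm{p})$. Then the linear-in-$\mathbf{n}$ terms vanish and $\mathbb{E}\|\mathbf{n}\|^2 = N_\mathrm{r}\sigma^2$. This leaves
\begin{equation*}
\Delta(\mathbf{x}) = 2\beta\,\mathbb{E}[R\mid\mathbf{x}] - \beta^2 N_\mathrm{t}\left(\mathbb{E}[R\mid\mathbf{x}] + N_\mathrm{r}\sigma^2\right) = \beta^2\left[(2c+N_\mathrm{t})\,\mathbb{E}[R\mid\mathbf{x}] - \sigma^2 N_\mathrm{r}N_\mathrm{t}\right].
\end{equation*}
This is exactly \eqref{eq:ta_delta} with the constants in \eqref{eq:ta_ab}. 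Strict monotonicity follows because $a>0$.

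\textbf{Main obstacle.} The algebra is routine; the delicate point is the independence in Step 3. For a corrected symbol $\mathbf{x}_\mathrm{C}$, the event that it was misdetected and then accepted depends on $\mathbf{y}[n]$, and hence on $\mathbf{n}$. So conditioning on the symbol type could in principle bias $\mathbf{n}$. I would handle this as the lemma's hypothesis is read: $\mathbf{x}$ is a fixed true transmitted symbol, so conditioning is on $\mathbf{x}$ alone. The dependence between $\mathbf{x}$ and $\mathbf{E}$ is left inside $\mathbb{E}[R\mid\mathbf{x}]$, which is precisely the quantity the subsequent comparison between $\mathbf{x}_\mathrm{D}$ and $\mathbf{x}_\mathrm{C}$ exploits. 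The data-slot noise is treated as a fresh zero-mean draw independent of $(\mathbf{H},\mathbf{N}_\mathrm{p},\mathbf{x})$. I would state this assumption explicitly in the proof.
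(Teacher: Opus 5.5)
Your proposal is correct and follows essentially the same route as the paper: the Sherman--Morrison rank-one update $\hat{\mathbf{H}}_\mathrm{new} = \hat{\mathbf{H}}_\mathrm{p} + (\mathbf{y}-\hat{\mathbf{H}}_\mathrm{p}\mathbf{x})\mathbf{x}^H/(L_\mathrm{p}+\sigma^2+N_\mathrm{t})$, expansion of the squared Frobenius error, and averaging out the data-slot noise under the same independence assumption that the paper states explicitly. The differences are cosmetic. You use only $\mathbf{x}^H\mathbf{A}^{-1}=\mathbf{x}^H/(c+N_\mathrm{t})$ rather than the full inverse, and you expand $\|\mathbf{E}-\beta(\mathbf{E}\mathbf{x}-\mathbf{n})\mathbf{x}^H\|_F^2$ directly instead of the projector form $\mathbf{E}(\mathbf{I}-\mathbf{x}\mathbf{x}^H/D)+\mathbf{n}\mathbf{x}^H/D$.
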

\begin{IEEEproof}
See Appendix~\ref{app:lemma_proof}.
\end{IEEEproof}

Lemma~\ref{lem:ta_delta} quantifies the improvement of the refined estimate relative to the pilot-only estimate, showing that this improvement grows monotonically with $R(\mathbf{x};\mathbf{E})$. Intuitively, an additional reference corrects the estimation error only along its own direction. Consequently, a larger error along that direction yields a greater achievable reduction. Since the coefficients $(a,b)$ depend only on $\|\mathbf{x}\|^2 = N_\mathrm{t}$, they are common to both reference types, so the entire difference between $\mathbf{x}_\mathrm{D}$ and $\mathbf{x}_\mathrm{C}$ is determined by their respective values of $\mathbb{E}[R(\mathbf{x};\mathbf{E})]$.

\subsection{Selection Bias of the Initial Detector}
\label{subsec:ta_prop}

We next show that the two reference types differ precisely in $\mathbb{E}[R(\mathbf{x};\mathbf{E})]$. 

\begin{proposition}
\label{prop:ta_bias}
Let $\mathbf{x}_\mathrm{D}$ be a correctly decoded reference and $\mathbf{x}_\mathrm{C}$ a corrected reference. The error alignment of a corrected reference is, in expectation, strictly larger than that of a correctly decoded one,
\begin{equation}
\mathbb{E}\big[R(\mathbf{x}_\mathrm{D};\mathbf{E})\big]
< \mathbb{E}\big[R(\mathbf{x}_\mathrm{C};\mathbf{E})\big].
\label{eq:ta_bias}
\end{equation}
\end{proposition}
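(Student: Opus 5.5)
Our approach is to condition on the transmitted symbol and on the pilot-only estimate $\hat{\mathbf{H}}_\mathrm{p}$, and then to exploit the fact established in the setup (Appendix~\ref{app:prop_proof}) that $\hat{\mathbf{H}}_\mathrm{p}$ and $\mathbf{E}$ are independent. The event ``decoded correctly'' versus ``misdecoded'' is determined by the MAP rule \eqref{eq:map}, which uses only $\hat{\mathbf{H}}_\mathrm{p}$ and the data-slot observation $\mathbf{y}[n] = \mathbf{H}\mathbf{x} + \mathbf{n}$. Writing $\mathbf{H} = \hat{\mathbf{H}}_\mathrm{p} - \mathbf{E}$, we get
\begin{equation*}
\mathbf{y}[n] - \hat{\mathbf{H}}_\mathrm{p}\mathbf{x} = -\mathbf{E}\mathbf{x} + \mathbf{n}.
\end{equation*}
The effective disturbance seen by the detector at the true symbol is therefore $\mathbf{w} = -\mathbf{E}\mathbf{x}+\mathbf{n}$. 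Conditionally on $\hat{\mathbf{H}}_\mathrm{p}$ and $\mathbf{x}$, this is a zero-mean circular Gaussian vector with covariance $(\alpha\|\mathbf{x}\|^2+\sigma^2)\mathbf{I}_{N_\mathrm{r}} = (\alpha N_\mathrm{t}+\sigma^2)\mathbf{I}$, because the rows of $\mathbf{E}$ are i.i.d.\ with covariance $\alpha\mathbf{I}$ by \eqref{eq:ta_Cp}. Correct detection is the event $\mathcal{D}$ that $\mathbf{w}$ falls in the Voronoi cell $\mathcal{V}$ of the point $\hat{\mathbf{H}}_\mathrm{p}\mathbf{x}$ in the lattice-like set $\{\hat{\mathbf{H}}_\mathrm{p}\mathbf{x}_k\}$, shifted to the origin. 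A corrected reference is by definition drawn from the complementary event $\mathcal{D}^c$. The proposition thus reduces to
\begin{equation*}
\mathbb{E}[\|\mathbf{E}\mathbf{x}\|^2\mid \mathcal{D}] < \mathbb{E}[\|\mathbf{E}\mathbf{x}\|^2\mid \mathcal{D}^c],
\end{equation*}
averaged over $\hat{\mathbf{H}}_\mathrm{p}$ and $\mathbf{x}$. We also treat the subsequent LLM correction and the Layer~2 test as acting only through $\mathcal{D}^c$, i.e.\ as not further biasing $\mathbf{E}\mathbf{x}$. This is a modelling assumption we state explicitly.

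Next, we decompose the relevant Gaussian vector. Let $\mathbf{u} = \mathbf{E}\mathbf{x}$, which has conditional covariance $\alpha N_\mathrm{t}\mathbf{I}$, and let $\mathbf{n}$ have covariance $\sigma^2\mathbf{I}$, with $\mathbf{u}$ and $\mathbf{n}$ independent. Then
\begin{equation*}
\mathbb{E}[\mathbf{u}\mid\mathbf{w}] = \beta\mathbf{w}, \qquad \beta = \frac{\alpha N_\mathrm{t}}{\alpha N_\mathrm{t}+\sigma^2},
\end{equation*}
and $\mathbf{u}-\beta\mathbf{w}$ is independent of $\mathbf{w}$. It follows that
\begin{equation*}
\mathbb{E}[\|\mathbf{u}\|^2\mid\mathbf{w}] = \beta^2\|\mathbf{w}\|^2 + c,
\end{equation*}
where $c$ is a constant. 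Since the event $\mathcal{D}$ is a function of $\mathbf{w}$ alone, comparing $R$ across the two events reduces to comparing $\mathbb{E}[\|\mathbf{w}\|^2\mid \mathbf{w}\in\mathcal{V}]$ with $\mathbb{E}[\|\mathbf{w}\|^2\mid \mathbf{w}\notin\mathcal{V}]$. We need $\beta>0$, which holds whenever $\sigma^2>0$ and $L_\mathrm{p}<\infty$.

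The main step, and the expected obstacle, is to show that an isotropic Gaussian $\mathbf{w}$ has a strictly smaller conditional second moment inside $\mathcal{V}$ than outside it. We first try the following route. The region $\mathcal{V}$ is a convex polytope containing the origin, defined by half-spaces $\{\mathrm{Re}\langle \mathbf{w},\mathbf{d}_k\rangle < \|\mathbf{d}_k\|^2/2\}$. Writing $\mathbf{w} = r\boldsymbol{\theta}$ in polar form, for every direction $\boldsymbol{\theta}$ the set of radii with $\mathbf{w}\in\mathcal{V}$ is an interval $[0,\rho(\boldsymbol{\theta}))$. Conditioning on $\boldsymbol{\theta}$, the event $\mathcal{D}$ is therefore $\{r<\rho(\boldsymbol{\theta})\}$, so $r^2$ on $\mathcal{D}$ is stochastically dominated by $r^2$ on $\mathcal{D}^c$; in fact every value of $r^2$ on $\mathcal{D}$ is below every value on $\mathcal{D}^c$. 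The inequality is strict whenever both events have positive probability, i.e.\ $0<P(\mathcal{D}\mid\boldsymbol{\theta})<1$.

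The remaining care is in averaging over $\boldsymbol{\theta}$, because the mixing weights of directions differ between the two conditional laws. We handle this through a covariance argument: because $\mathbf{1}_\mathcal{D}$ is nonincreasing in $r$ for every fixed direction, $\mathrm{Cov}(r^2,\mathbf{1}_\mathcal{D}\mid\boldsymbol{\theta})<0$. To pass to the unconditional statement we must also control the term $\mathrm{Cov}\big(\mathbb{E}[r^2\mid\boldsymbol{\theta}],P(\mathcal{D}\mid\boldsymbol{\theta})\big)$. Since $r$ and $\boldsymbol{\theta}$ are independent for an isotropic Gaussian, $\mathbb{E}[r^2\mid\boldsymbol{\theta}]$ is constant and this term vanishes. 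We conclude $\mathrm{Cov}(\|\mathbf{w}\|^2,\mathbf{1}_\mathcal{D})<0$, which is equivalent to the strict inequality between the conditional means. This holds for every $(\hat{\mathbf{H}}_\mathrm{p},\mathbf{x})$ with $0<P(\mathcal{D})<1$. Averaging over $(\hat{\mathbf{H}}_\mathrm{p},\mathbf{x})$ then yields \eqref{eq:ta_bias}.

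For that final averaging we again rely on the independence of $\hat{\mathbf{H}}_\mathrm{p}$ and $\mathbf{E}$, so that $c$ and $\beta$ do not depend on the conditioning. We also need the mild assumption that the mixture over $(\hat{\mathbf{H}}_\mathrm{p},\mathbf{x})$ does not reverse the ordering. If the weighting over $(\hat{\mathbf{H}}_\mathrm{p},\mathbf{x})$ differs between the two events, we would instead state the result per realization of $(\hat{\mathbf{H}}_\mathrm{p},\mathbf{x})$, i.e.\ conditionally on them.
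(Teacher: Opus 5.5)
Your argument is correct and takes a genuinely different route from the paper's. The paper conditions on $R$ and studies $\Pr\{\mathcal{C}\mid R=r\}$. It invokes the nearest-neighbor approximation to collapse the MAP event to a single pairwise test against the minimum-margin competitor, which gives the closed form $\Phi(\gamma-\sqrt{r}\,t)$. It then shows this is strictly decreasing by differentiating and pairing $t$ with $-t$, and finishes with Chebyshev's correlation inequality applied to $R$ and $g(R)$.

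You condition the other way. Given $(\hat{\mathbf{H}}_\mathrm{p},\mathbf{x})$, the detection event is an exact function of the effective noise $\mathbf{w}$, namely membership in the Voronoi polytope $\mathcal{V}$. Gaussian regression replaces $R$ by $\mathbb{E}[R\mid\mathbf{w}]=\beta^2\|\mathbf{w}\|^2+c$, after which only the radial structure of $\mathcal{V}$ matters. Star-shapedness of $\mathcal{V}$ together with the independence of $r$ and $\boldsymbol{\theta}$ then gives the negative covariance.

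Your route is exact: it handles the full MAP region with all $K-1$ competitors and needs no nearest-neighbor approximation, whereas the paper's proof is only as rigorous as that approximation. The paper's route, in exchange, produces an explicit strictly decreasing success probability $p(r)$, which directly formalizes its narrative that larger error alignment means fewer correct decisions.

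One small slip: since $\mathbf{w}=-\mathbf{u}+\mathbf{n}$, you should have $\mathbb{E}[\mathbf{u}\mid\mathbf{w}]=-\beta\mathbf{w}$, and $c=N_\mathrm{r}\alpha N_\mathrm{t}(1-\beta)$. This is harmless because only $\|\cdot\|^2$ enters.

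Your closing hedge, that the mixture over $(\hat{\mathbf{H}}_\mathrm{p},\mathbf{x})$ might reverse the ordering, is unnecessary, and you already have the tool to remove it. Because $\mathbf{E}$ is independent of $(\hat{\mathbf{H}}_\mathrm{p},\mathbf{x})$, the conditional mean $\mathbb{E}[R\mid\hat{\mathbf{H}}_\mathrm{p},\mathbf{x}]=\alpha N_\mathrm{t}N_\mathrm{r}$ is a constant. The law of total covariance then gives $\mathrm{Cov}(R,\mathbf{1}_{\mathcal{D}})=\mathbb{E}\big[\beta^2\,\mathrm{Cov}(\|\mathbf{w}\|^2,\mathbf{1}_{\mathcal{D}}\mid\hat{\mathbf{H}}_\mathrm{p},\mathbf{x})\big]<0$, exactly as in your $\boldsymbol{\theta}$-averaging step. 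Equivalently, your per-realization result reads $\mathbb{E}[R\mid\mathcal{D},\hat{\mathbf{H}}_\mathrm{p},\mathbf{x}]<\alpha N_\mathrm{t}N_\mathrm{r}<\mathbb{E}[R\mid\mathcal{D}^c,\hat{\mathbf{H}}_\mathrm{p},\mathbf{x}]$. The pivot does not depend on the realization, so averaging with any weights preserves the ordering. This yields the unconditional sandwich $\mathbb{E}[R\mid\mathcal{D}]<\mathbb{E}[R]<\mathbb{E}[R\mid\mathcal{D}^c]$, the same form the paper ends with. No Simpson-type reversal can occur.

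Your explicit assumption that the LLM and Layer~2 act only through $\mathcal{D}^c$ is also made by the paper, implicitly, when it identifies $\mathbb{E}[R(\mathbf{x}_\mathrm{C};\mathbf{E})]$ with $\mathbb{E}[R\mid\mathcal{C}^c]$.
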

\begin{IEEEproof}
See Appendix~\ref{app:prop_proof}.
\end{IEEEproof}

The advantage in Proposition~\ref{prop:ta_bias} stems from a selection effect, not from any intrinsic difference between the symbols. Before any detection, $R(\mathbf{x};\mathbf{E})$ follows the same distribution for every transmitted symbol. Therefore, the difference arises solely from how each reference is selected. When the error alignment $R(\mathbf{x};\mathbf{E})$ is large, the channel estimate is heavily distorted along the direction of $\mathbf{x}$, pushing the detector toward a wrong decision. A symbol thus becomes less likely to be decoded correctly as its $R(\mathbf{x};\mathbf{E})$ grows. As a result, the correctly decoded references $\mathbf{x}_\mathrm{D}$ generally correspond to small error alignments, whereas the initially misdecoded but subsequently corrected references $\mathbf{x}_\mathrm{C}$ arise from large error alignments. The two conditional expectations therefore separate as in~\eqref{eq:ta_bias}, which Appendix~\ref{app:prop_proof} formalizes through a correlation inequality.

\subsection{Main Result}
\label{subsec:ta_main}

\begin{theorem}
\label{thm:ta_main}
The expected NMSE reduction contributed by a single corrected reference exceeds that contributed by a single correctly decoded reference, i.e.,
\begin{equation}
\mathbb{E}[\Delta(\mathbf{x}_\mathrm{C})]
> \mathbb{E}[\Delta(\mathbf{x}_\mathrm{D})].
\label{eq:ta_main_result}
\end{equation}
\end{theorem}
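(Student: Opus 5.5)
The theorem follows by combining Lemma~\ref{lem:ta_delta} with Proposition~\ref{prop:ta_bias}, so the plan is short. First, apply Lemma~\ref{lem:ta_delta} to each reference type separately. Both $\mathbf{x}_\mathrm{D}$ and $\mathbf{x}_\mathrm{C}$ are assumed equal to the true transmitted symbol. Both have $\|\mathbf{x}\|^2 = N_\mathrm{t}$, since every per-slot vector is drawn from $\mathcal{X}^{N_\mathrm{t}}$ with unit-energy constellation points, which is the normalization implicit in the lemma. Hence the lemma gives
\begin{equation*}
\Delta(\mathbf{x}_\mathrm{D}) = a\,\mathbb{E}\!\left[R(\mathbf{x}_\mathrm{D};\mathbf{E})\,\big|\,\mathbf{x}_\mathrm{D}\right] - b,
\qquad
\Delta(\mathbf{x}_\mathrm{C}) = a\,\mathbb{E}\!\left[R(\mathbf{x}_\mathrm{C};\mathbf{E})\,\big|\,\mathbf{x}_\mathrm{C}\right] - b,
\end{equation*}
with the same constants $a,b$ from \eqref{eq:ta_ab}. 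The key point is that $a$ and $b$ depend only on $L_\mathrm{p}$, $\sigma^2$, $N_\mathrm{t}$, and $N_\mathrm{r}$, not on which reference type is used.

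Second, take the outer expectation over the random reference. This averages over the transmitted symbol and the selection event that defines each type: correct MAP decision for $\mathbf{x}_\mathrm{D}$, and MAP error followed by a successful LLM correction for $\mathbf{x}_\mathrm{C}$. By linearity and the tower property,
\begin{equation*}
\mathbb{E}[\Delta(\mathbf{x}_\mathrm{C})] - \mathbb{E}[\Delta(\mathbf{x}_\mathrm{D})]
= a\Big(\mathbb{E}\big[R(\mathbf{x}_\mathrm{C};\mathbf{E})\big] - \mathbb{E}\big[R(\mathbf{x}_\mathrm{D};\mathbf{E})\big]\Big),
\end{equation*}
and the $-b$ terms cancel.

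Third, sign both factors and conclude. Since $L_\mathrm{p}>0$, $\sigma^2>0$, and $N_\mathrm{t}\ge 1$, we have $a>0$. By Proposition~\ref{prop:ta_bias}, the bracket is strictly positive. The product is therefore strictly positive, which is exactly \eqref{eq:ta_main_result}.

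The only step needing care is the second: checking that the conditioning in Lemma~\ref{lem:ta_delta} and the unconditional expectations in Proposition~\ref{prop:ta_bias} refer to the same distribution. In particular, $\mathbb{E}[R(\mathbf{x}_\mathrm{C};\mathbf{E})]$ must be taken under the law of $\mathbf{E}$ given the selection event, not its prior law. Otherwise the selection bias would vanish and the two terms would coincide. I would handle this by reading the conditional expectation in \eqref{eq:ta_delta} as conditioning on the reference together with its selection event. I would also note that the derivation of $a$ and $b$ in Appendix~\ref{app:lemma_proof} uses only the data-slot noise $\mathbf{N}_\mathrm{d}$ at the new slot, which is independent of $(\mathbf{H},\mathbf{N}_\mathrm{p})$, and the second moment $\mathbb{E}[R\mid\cdot]$. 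Since those two ingredients are unaffected by the selection event, the affine identity remains valid after conditioning, and the iterated expectation argument goes through.
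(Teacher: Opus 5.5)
Your three-step argument is the paper's proof written out in more detail. The paper simply combines Lemma~\ref{lem:ta_delta} (common coefficients $a,b$ with $a>0$) with Proposition~\ref{prop:ta_bias}, and your cancellation of $b$ plus the sign check on $a$ make that combination explicit.

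The one problem is in your last paragraph. You correctly identify the delicate step, but your justification for it is false.

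\begin{itemize}
\item The noise $\mathbf{n}$ of the new slot is not unaffected by the selection event. It is exactly the noise that enters the MAP decision at that slot. In Appendix~\ref{app:prop_proof}, the event $\mathcal{C}$ is defined through $\mathbf{w}=\mathbf{n}-\mathbf{E}\mathbf{x}$ in \eqref{eq:app_minmetric}.
\item Independence of $\mathbf{n}$ from $(\mathbf{H},\mathbf{N}_\mathrm{p})$ therefore does not survive conditioning on $\mathcal{C}$ or $\mathcal{C}^{c}$. Conditioning on an event that depends jointly on $\mathbf{n}$ and $\mathbf{E}\mathbf{x}$ generally makes them dependent.
\item Concretely, squaring \eqref{eq:app_residual} produces the cross term $\frac{2(D-N_\mathrm{t})}{D^2}\mathrm{Re}\{\mathbf{n}^H\mathbf{E}\mathbf{x}\}$ and the term $N_\mathrm{t}\|\mathbf{n}\|^2/D^2$. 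Under the selection event, the first need not average to zero and the second need not average to $\sigma^2N_\mathrm{r}N_\mathrm{t}/D^2$. So neither the affine form nor a common $b$ for both reference types follows.
\end{itemize}

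The paper does not derive this step. At the start of Appendix~\ref{app:lemma_proof} it explicitly assumes that $\mathbf{n}$ is independent of the detection outcome, so that $\mathbb{E}_{\mathbf{n}}$ is taken over the nominal distribution. This is an idealization, since in the actual scheme the same $\mathbf{y}[n]$ is used for both detection and refinement.

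You should state this as an assumption rather than claim it follows from the model. With that assumption, your argument is complete and coincides with the paper's.
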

\begin{IEEEproof}
By Lemma~\ref{lem:ta_delta}, $\Delta(\mathbf{x})$ is strictly increasing in $\mathbb{E}[R(\mathbf{x};\mathbf{E})\mid\mathbf{x}]$, and by Proposition~\ref{prop:ta_bias}, $\mathbb{E}[R(\mathbf{x}_\mathrm{D};\mathbf{E})] < \mathbb{E}[R(\mathbf{x}_\mathrm{C};\mathbf{E})]$. Combining the two yields the result.
\end{IEEEproof}

Theorem~\ref{thm:ta_main} formalizes the fundamental advantage of the semantic pilot. Although both reference types contribute to channel refinement, their impacts differ significantly. A correctly decoded symbol aligns with directions where the initial estimation error is already minimal, resulting in only incremental gains. In contrast, a corrected symbol aligns with directions of substantial estimation error, yielding a more pronounced improvement in estimation accuracy. The proposed framework further improves channel estimation by using corrected symbols.

\section{Simulation Setup}
\label{sec:setup}
We consider a $2 \times 2$ MIMO system with QPSK modulation, resulting in $K = 16$ candidate symbol vectors per time slot. The channel $\mathbf{H} \in \mathbb{C}^{N_\mathrm{r} \times N_\mathrm{t}}$ is modeled as a frequency-flat Rayleigh fading channel with i.i.d. entries drawn from $\mathcal{CN}(0, 1)$, constant within each transmission frame. The pilot sequence is a Zadoff-Chu sequence of length $L_\mathrm{p} = 16$. Each character is encoded using 8-bit ASCII. The text corpus is drawn from the Europarl dataset~\cite{koehn-2005-europarl}, with non-overlapping splits for LLM fine-tuning, validation, and testing.

\subsection{LLM Configuration}
The base model is Qwen3-8B~\cite{yang2025qwen3}, an 8-billion-parameter open-weight LLM. The model is fine-tuned via QLoRA~\cite{dettmers2023qlora} with 4-bit NormalFloat (NF4) quantization, reducing the model size from 16.4~GB to approximately 4.1~GB while keeping only 0.5\% of the total parameters trainable through low-rank adapters. The fine-tuning configuration is summarized in Table~\ref{tab:llm_config}, and the hardware specifications are listed in Table~\ref{tab:hardware}.

\begin{table}[h]
\centering
\caption{LLM fine-tuning configuration.}
\label{tab:llm_config}
\begin{tabular}{ll}
\toprule
Parameter & Value \\
\midrule
Base model & Qwen3-8B \\
Quantization & NF4 \\
LoRA rank & 16 \\
LoRA scaling & 32 \\
Trainable parameters & $\sim$42M (0.5\%) \\
Training samples & 20,000 \\
Validation samples & 2,000 \\
Epochs & 3 \\
Effective batch size & 32 \\
Learning rate & $2 \times 10^{-4}$ \\
LR scheduler & Cosine with warmup \\
Decoding & Greedy \\
\bottomrule
\end{tabular}
\end{table}

\begin{table}[h]
\centering
\caption{Hardware specifications.}
\label{tab:hardware}
\begin{tabular}{ll}
\toprule
Item & Specification \\
\midrule
GPU & NVIDIA GeForce RTX 3090 (24~GB) \\
CPU & AMD Ryzen 7 5800X (8-core) \\
System RAM & 128~GB DDR4 \\
\bottomrule
\end{tabular}
\end{table}

\subsection{Framework Parameters}
The NW alignment scoring parameters are set to $s_\mathrm{match} = +2$, $s_\mathrm{mis} = -1$, $g_\mathrm{o} = -2$, and $g_\mathrm{e} = -0.5$. The $d$-ratio threshold for Layer~2 cross-validation is set to $\tau_d = 2$.

\subsection{Compared Methods}
The following methods are compared in the simulations.

\begin{itemize}
    \item \textbf{Pilot Only}: This baseline uses the initial channel estimate $\hat{\mathbf{H}}_\mathrm{p}$ from \eqref{eq:lmmse} directly for MAP detection without any data-aided refinement.

    \item \textbf{Semi-DA RL~\cite{kim2023semi}}: This scheme utilizes a reinforcement learning-based data-aided channel estimator. To ensure a fair comparison, the candidate pool for symbol selection includes the entire data block.

    \item \textbf{LLM All Corrected}: All symbols corrected by the LLM serve as reference symbols for data-aided estimation without applying any alignment or verification.

    \item \textbf{Layer~1 w/o NW~\cite{park2026semantic}}: This refers to our preliminary work, which applies only Layer~1 semantic verification without the NW alignment module or Layer~2 physical-layer cross-validation.

    \item \textbf{Layer~1 w/ NW}: This baseline combines Layer~1 semantic verification with the NW alignment module while excluding Layer~2 cross-validation.

    \item \textbf{MAP Oracle}: This benchmark uses only the correctly decoded MAP symbols by assuming perfect knowledge of the decoding results.
    
    \item \textbf{Perfect CSI}: The actual channel matrix $\mathbf{H}$ is used for detection.

    \item \textbf{Proposed}: This represents the complete proposed framework, integrating NW alignment, Layer~1 semantic verification, and Layer~2 physical-layer cross-validation using a fine-tuned LLM.
\end{itemize}

\begin{table*}[t]
\centering
\caption{BER comparison of compared methods. Best result in bold, second-best underlined.}
\label{tab:ber_main}
\begin{tabular}{l cccccccccc}
\toprule
Method & 1 dB & 2 dB & 3 dB & 4 dB & 5 dB
       & 6 dB & 7 dB & 8 dB & 9 dB & 10 dB \\
\midrule
Pilot Only
    & .1412 & .1170 & .0950 & .0751 & .0571
    & .0430 & .0318 & .0224 & .0151 & .0106 \\
Semi-DA RL
    & .1403 & .1145 & .0911 & .0711 & .0533
    & .0395 & .0285 & .0198 & .0135 & .0093 \\
Layer~1 w/o NW
    & .1406 & .1162 & .0939 & .0736 & .0554
    & .0414 & .0301 & .0212 & .0142 & .0098 \\
Layer~1 w/ NW
    & .1359 & .1107 & .0881 & .0688 & .0517
    & .0381 & .0278 & .0192 & .0131 & .0091 \\
LLM All Corrected
    & .1376 & .1139 & .0934 & .0740 & .0559
    & .0421 & .0327 & .0227 & .0170 & .0111 \\
Proposed
    & \underline{.1326} & \underline{.1078}
    & \underline{.0862} & \underline{.0670}
    & \underline{.0505} & \underline{.0370}
    & \textbf{.0270} & \underline{.0190}
    & \textbf{.0127} & \textbf{.0089} \\
\midrule
MAP Oracle
    & \textbf{.1324} & \textbf{.1074}
    & \textbf{.0855} & \textbf{.0671}
    & \textbf{.0504} & \textbf{.0373}
    & \underline{.0273} & \textbf{.0190}
    & \underline{.0127} & \underline{.0090} \\
Perfect CSI
    & .1260 & .1035 & .0825 & .0652 & .0489
    & .0360 & .0269 & .0186 & .0124 & .0089 \\
\bottomrule
\end{tabular}
\end{table*}

\subsection{Evaluation Metrics}
We evaluate the proposed framework with two sets of metrics.

\subsubsection{Channel estimation and detection performance}
\begin{itemize}
    \item \textbf{NMSE}: Normalized mean square error of the channel estimate, defined as $\|\hat{\mathbf{H}} - \mathbf{H}\|_F^2 /\|\mathbf{H}\|_F^2$.

    \item \textbf{BER}: Bit error rate after MAP detection with the respective channel estimate.
\end{itemize}

\subsubsection{Semantic pilot selection quality}
\begin{itemize}
    \item \textbf{Selection Ratio}: The proportion of data symbols included in the semantic pilot out of all transmitted data symbols.

    \item \textbf{Precision}: The proportion of selected semantic pilot symbols that match the true transmitted symbols.

    \item \textbf{Detection Rate}: The proportion of correctly decoded data symbols that are included in the semantic pilot, measuring the completeness of the selection.

    \item \textbf{Layer 2 Contribution}: The proportion of semantic pilot symbols that originate from Layer~2 corrections.
\end{itemize}

\section{Simulation Results}
\label{sec:results}

\subsection{Channel Estimation and Detection Performance}
\label{subsec:ce_performance}

Fig.~\ref{fig:main_nmse} compares the NMSE of the evaluated methods over SNRs of 1–10~dB. The proposed method consistently yields the lowest NMSE, notably surpassing the MAP Oracle, which serves as the ideal benchmark using all correctly decoded symbols as references. This result indicates that the LLM-aided corrections in Layer~2 provide reliable reference information that is unattainable through conventional physical-layer processing. The factors driving this improvement are analyzed in detail in Section~\ref{subsec:pilot_analysis}.
Among the baseline methods, Layer~1~w/~NW outperforms Layer~1 w/o~NW across all tested SNR levels. This confirms that the NW alignment module enhances the reliability of character-level comparison. Additionally, Layer~1~w/o~NW consistently outperforms Semi-DA RL. This indicates that semantic-based symbol selection is more effective than physical-layer-based selection in the considered simulation setup. In contrast, the LLM All Corrected method exhibits the worst performance, consistently falling below the Pilot-only baseline. This degradation is attributed to the inclusion of incorrect corrections and LLM hallucinations when all LLM outputs are utilized without a filtering mechanism. These unreliable symbols contaminate the channel estimate, demonstrating the importance of verifying corrections before incorporating them into the pilot.

Table~\ref{tab:ber_main} summarizes the BER results following MAP detection for each channel estimation scheme. The proposed framework outperforms all practical methods across all SNR levels, closely approaching the MAP Oracle and Perfect CSI bounds at high SNRs. These findings confirm that the channel estimation gains observed in the NMSE analysis directly translate into improved detection performance.

\begin{figure}[t]
    \centering
    \includegraphics[width=0.85\linewidth]{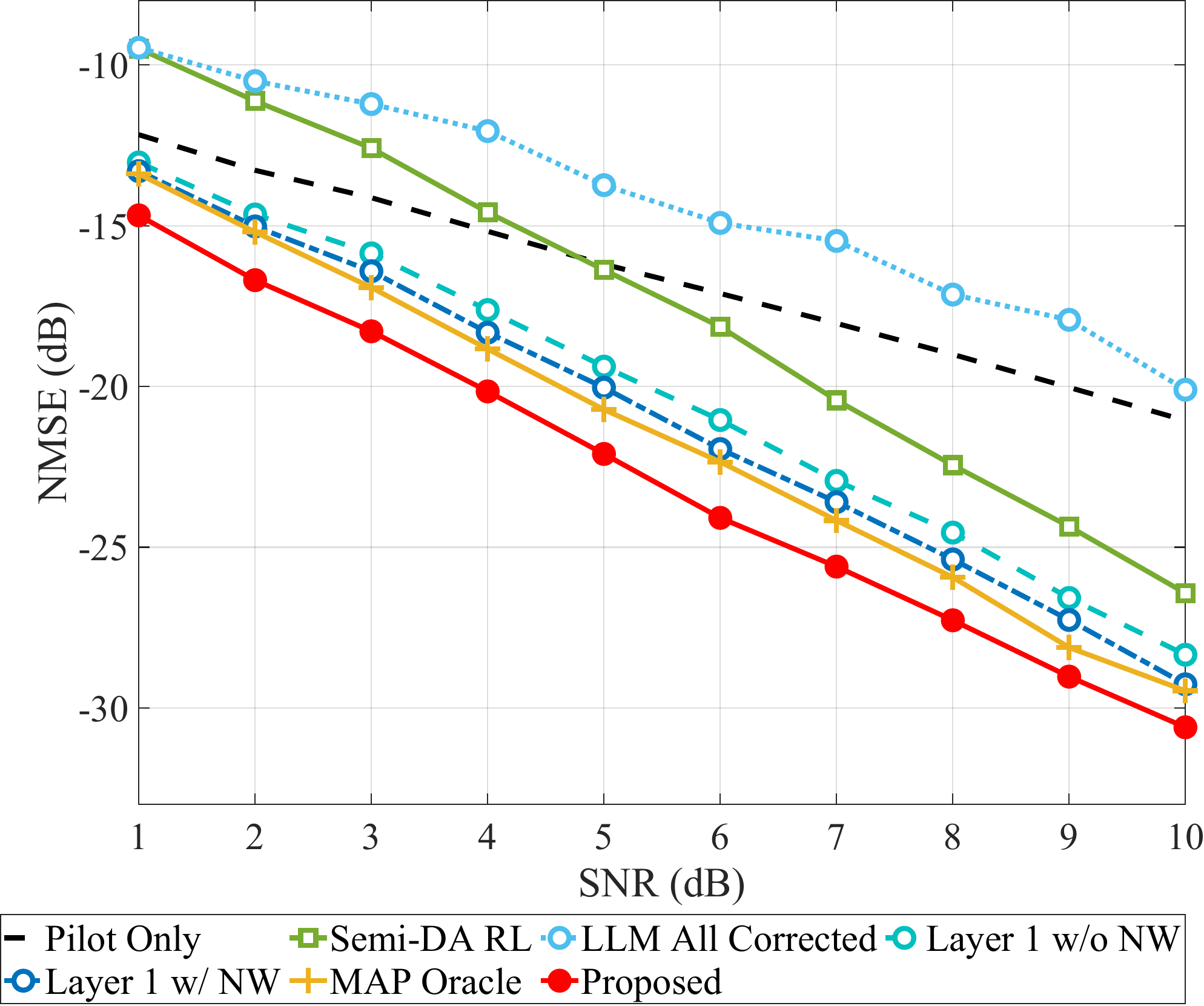}
    \caption{NMSE performance versus SNR for the compared methods.}
    \label{fig:main_nmse}
\end{figure}

\subsection{Semantic Pilot Selection Analysis}
\label{subsec:pilot_analysis}
\begin{figure}[t]
    \centering
    \begin{subfigure}{0.85\linewidth}
        \centering
        \includegraphics[width=\linewidth]{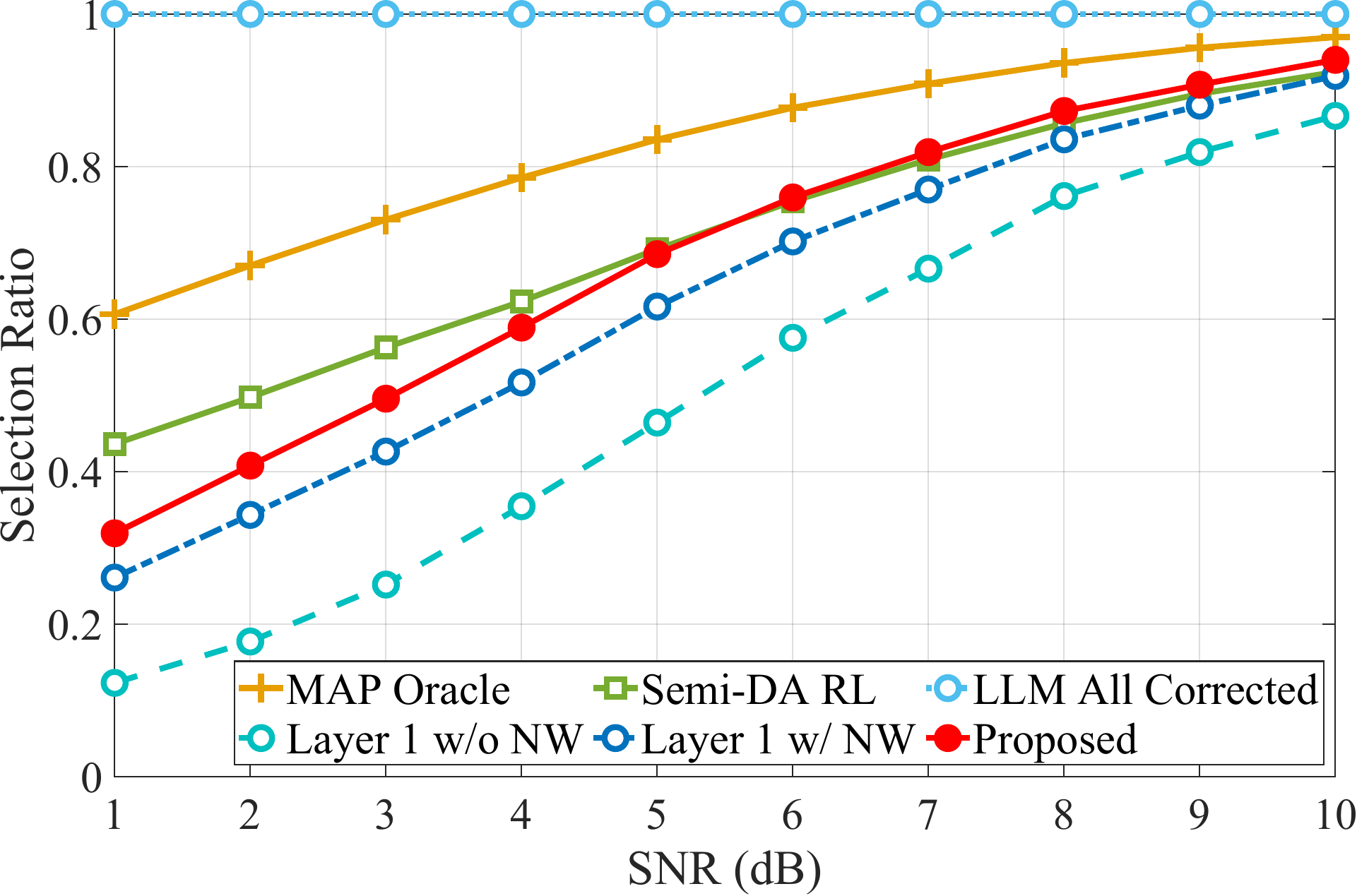}
        \caption{Selection Ratio}
        \label{fig:main_selection_ratio}
    \end{subfigure}
    
    \vspace{1mm}
    
    \begin{subfigure}{0.85\linewidth}
        \centering
        \includegraphics[width=\linewidth]{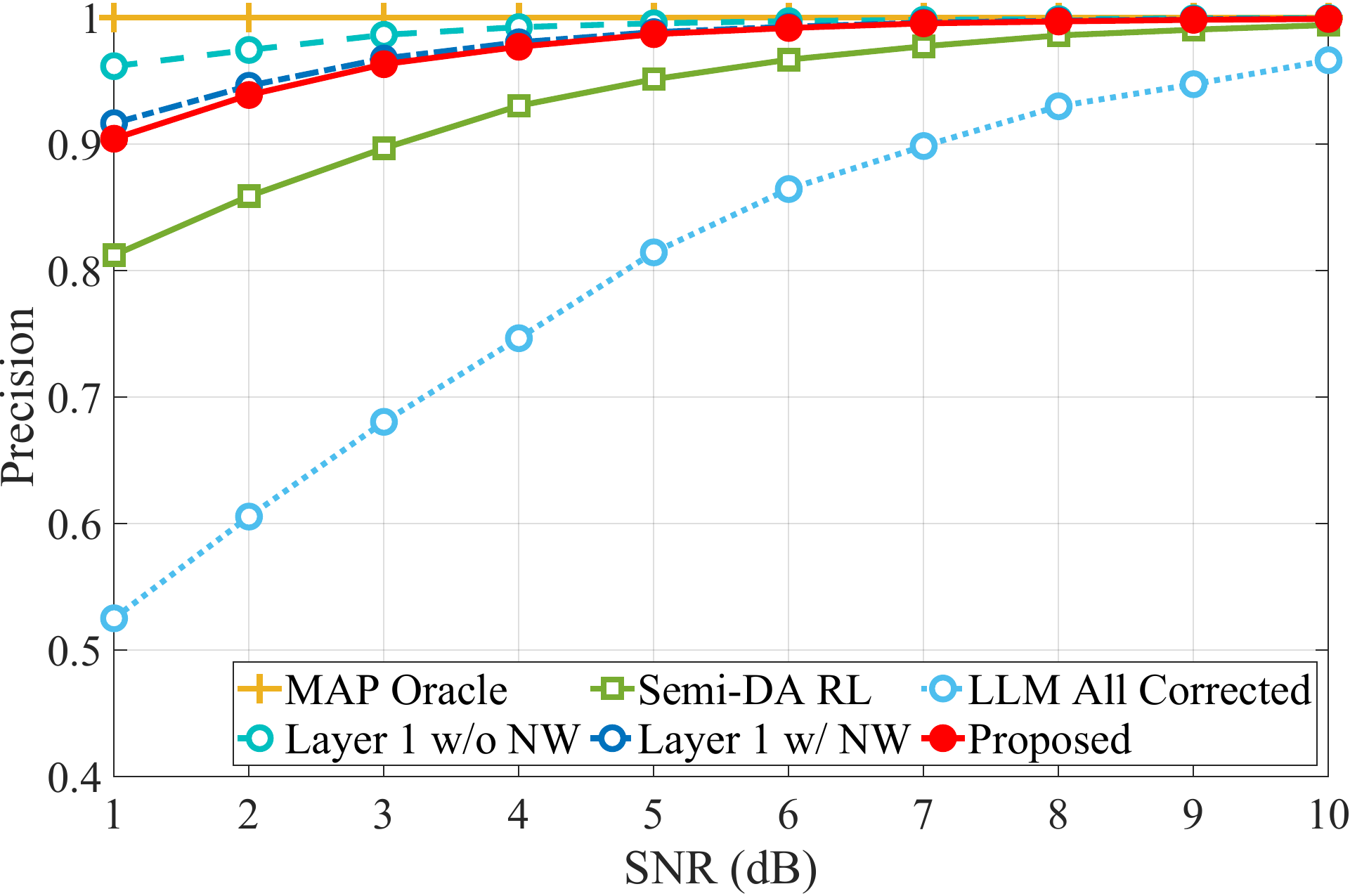}
        \caption{Precision}
        \label{fig:main_precision}
    \end{subfigure}
    
    \vspace{1mm}
    
    \begin{subfigure}{0.85\linewidth}
        \centering
        \includegraphics[width=\linewidth]{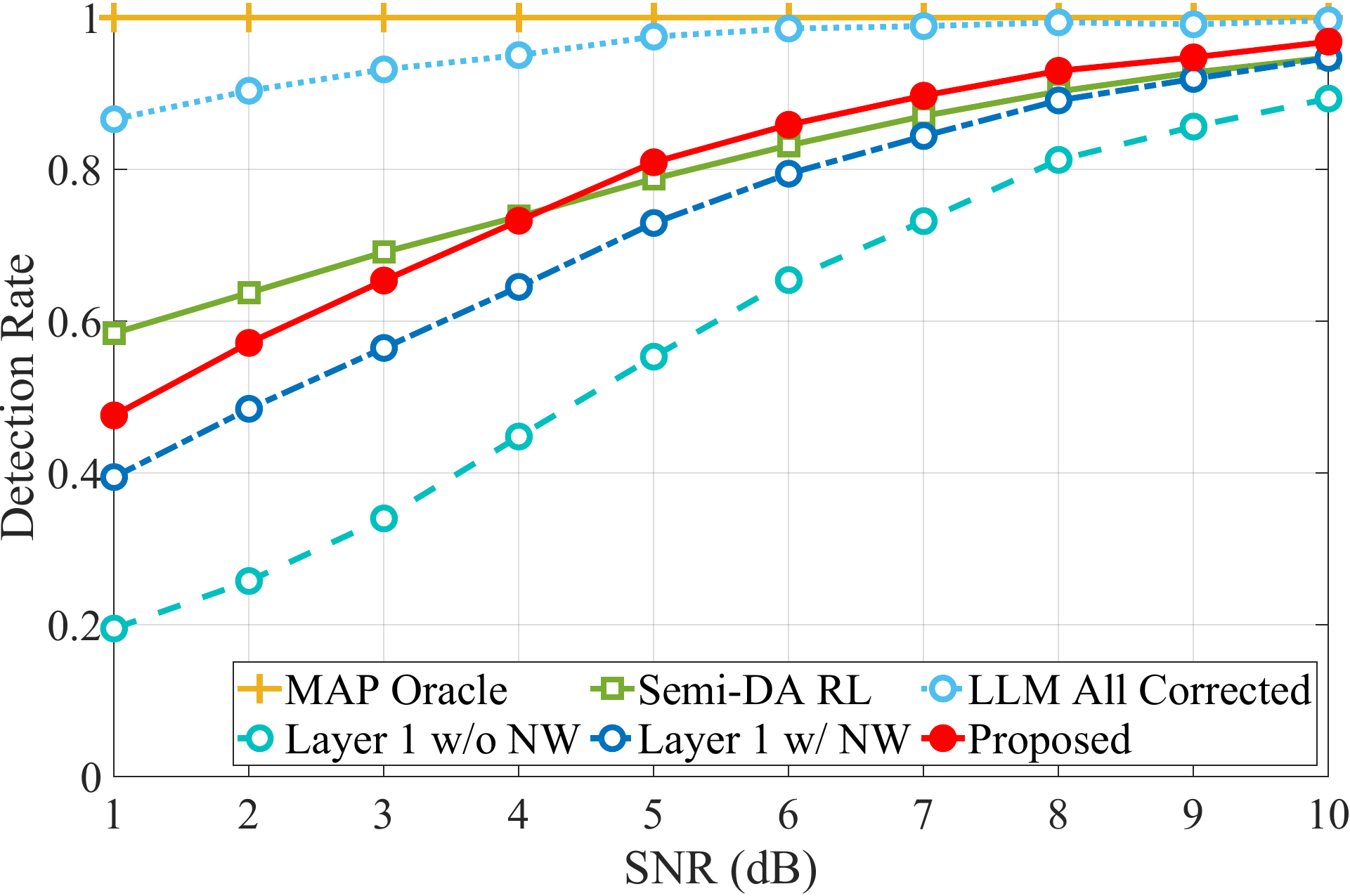}
        \caption{Detection Rate}
        \label{fig:main_detection_rate}
    \end{subfigure}
    \caption{Semantic pilot selection quality of the compared
methods versus SNR.}
    \label{fig:pilot_metrics}
\end{figure}

Fig.~\ref{fig:pilot_metrics} compares the pilot selection quality in terms of selection ratio, precision, and detection rate.
Fig.~\ref{fig:main_selection_ratio} shows the selection ratio, which represents the fraction of data symbols used as additional pilot. LLM All Corrected always achieves a selection ratio of 1, as it uses all data symbols without any filtering. MAP Oracle represents the fraction of correctly decoded symbols among all transmitted symbols, which serves as the upper bound for any selection method that relies on decoded symbols. The Proposed method achieves a higher selection ratio than both Layer~1 baselines, owing to the additional symbols contributed by Layer~2.
Fig.~\ref{fig:main_precision} shows the precision, defined as the fraction of selected pilot symbols that match the ground-truth transmitted symbols. MAP Oracle achieves a precision of 1 by definition. The Layer~1 baselines achieve the highest precision among practical methods, as they accept only positions where the decoded and corrected texts agree, which is inherently a strict criterion. However, as observed in Fig.~\ref{fig:main_selection_ratio}, this strictness results in a low selection ratio, limiting the number of reference symbols available for channel estimation. 
Fig.~\ref{fig:main_detection_rate} reports the detection rate, which measures the completeness of the selection rather than its reliability. MAP Oracle achieves a detection rate of 1 by definition. LLM All Corrected also exhibits a high detection rate, since using all corrected symbols. However, it does not reach 1 due to occasional LLM correction errors that displace correctly decoded symbols.
While Semi-DA RL achieves a higher detection rate than the proposed method at low SNR, it yields an inferior NMSE due to the lower precision shown in Fig.~\ref{fig:main_precision}.
The Layer~1 baselines yield the lowest detection rates due to their strict selection criteria. This approach ensures high precision but fails to utilize a substantial volume of symbols.
The proposed method achieves superior NMSE performance compared to the MAP Oracle despite utilizing fewer reference symbols. This behavior is consistent with Theorem~\ref{thm:ta_main}, as corrected symbols admitted through Layer~2 provide a greater NMSE reduction than the correctly decoded symbols.

\begin{figure}[t]
    \centering
    \includegraphics[width=0.85\linewidth]{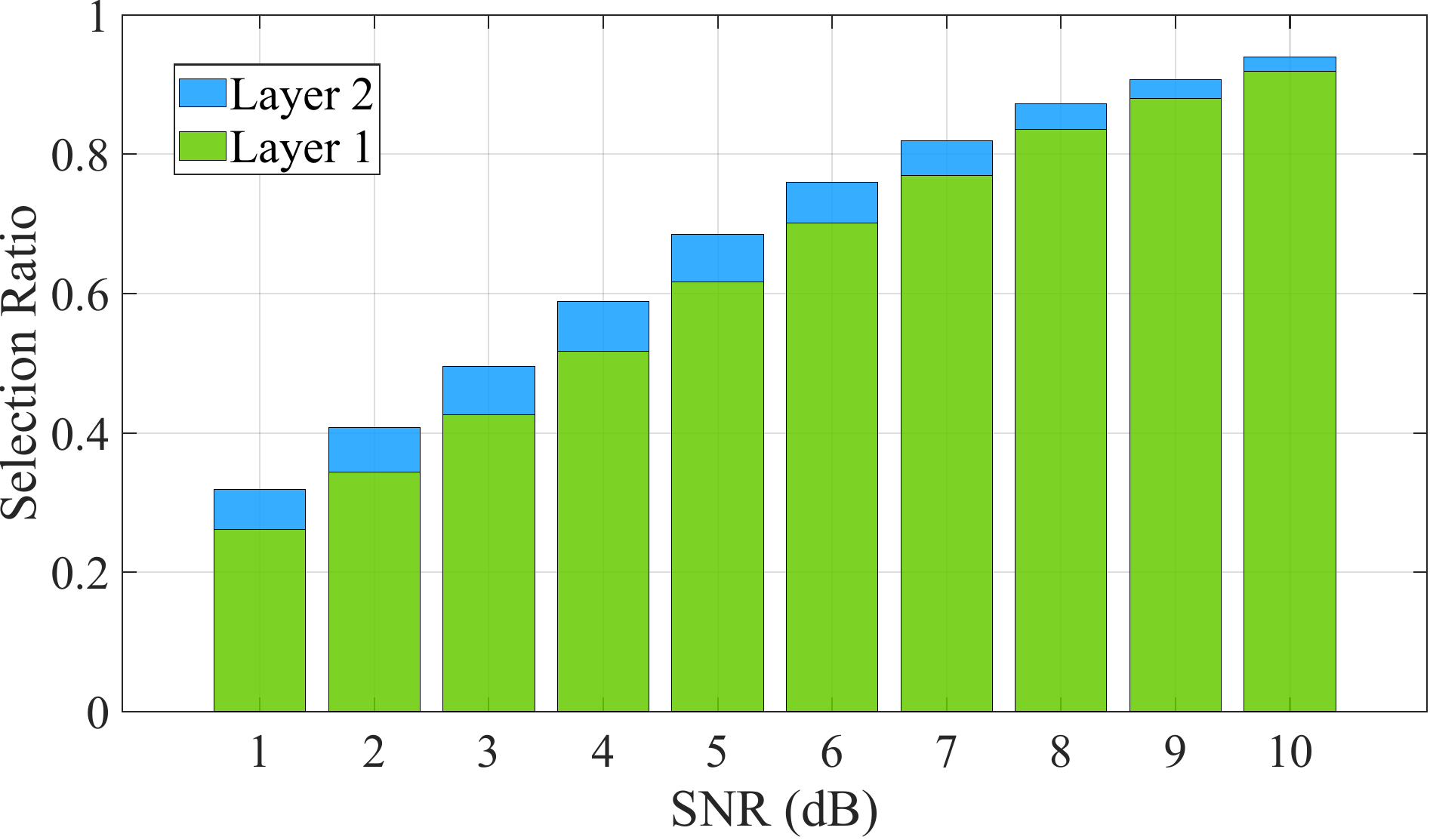}
    \caption{Composition of the semantic pilot versus SNR, showing Layer~1 and Layer~2 contributions.}
    \label{fig:layer_composition}
\end{figure}

\begin{figure}[t]
    \centering
    \includegraphics[width=0.85\linewidth]{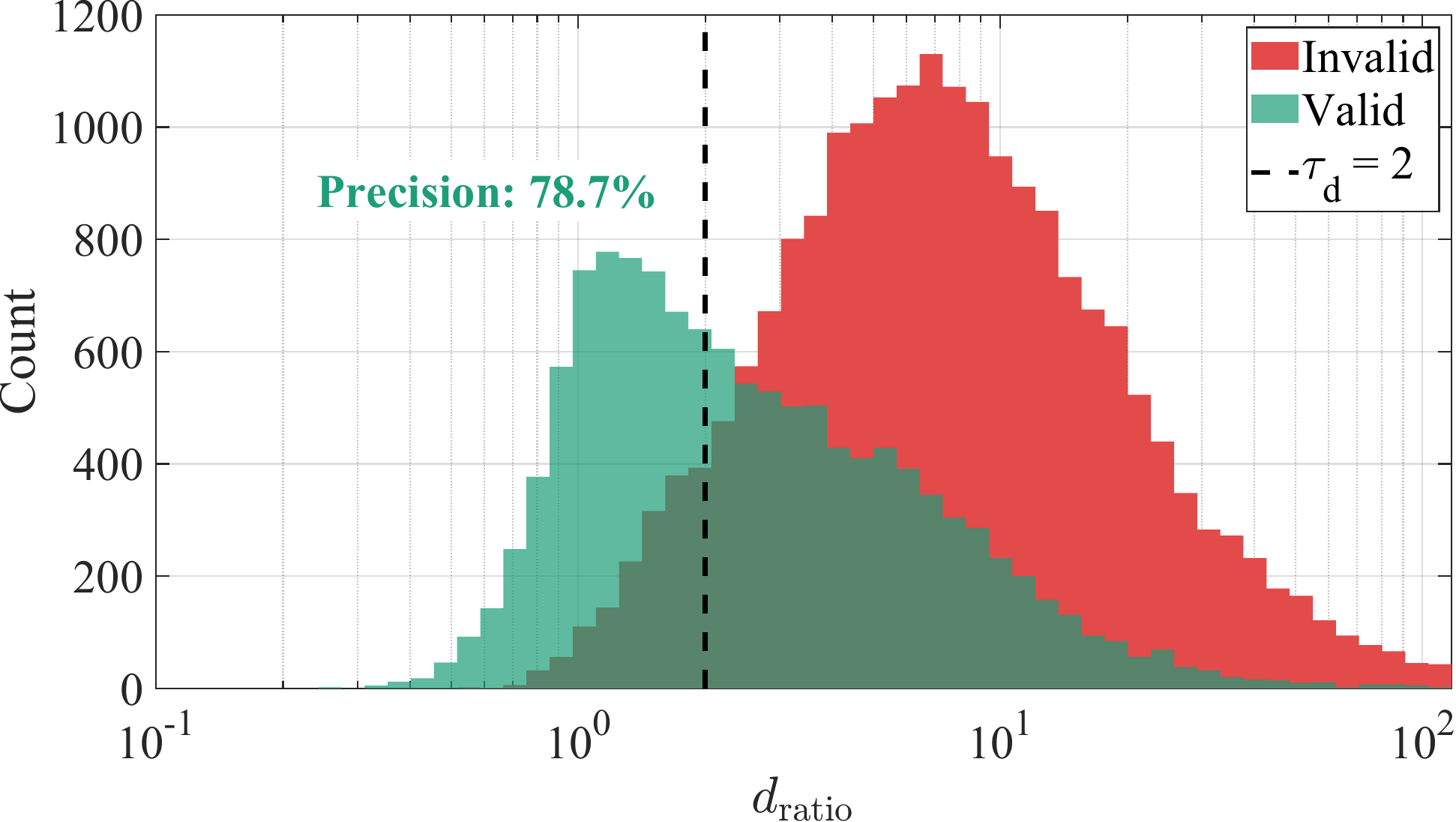}
    \caption{Distribution of $d_\mathrm{ratio}$ in Layer~2 cross-validation at 5~dB, separated into valid and invalid corrections.}
    \label{fig:dratio_hist}

    \vspace{3mm}

    \captionof{table}{Effect of $\tau_d$ on Layer~2 cross-validation
at 5~dB, evaluated on the hyperparameter tuning set.}
    \vspace{2mm}
    \label{tab:tau_sweep}
    \begin{tabular}{cccc}
    \toprule
    $\tau_d$ & Acceptance Rate (\%) & Precision (\%) & NMSE (dB) \\
    \midrule
    1 & 4.1  & 90.0 & $-19.72$ \\
    2 & 20.6 & 70.7 & $-21.31$ \\
    3 & 31.9 & 59.6 & $-20.85$ \\
    4 & 42.7 & 51.9 & $-19.31$ \\
    5 & 51.6 & 47.1 & $-17.67$ \\
    \bottomrule
    \end{tabular}
\end{figure}

\begin{figure*}[t]
    \centering
    \begin{subfigure}{0.195\linewidth}
        \centering
        \includegraphics[width=\linewidth]{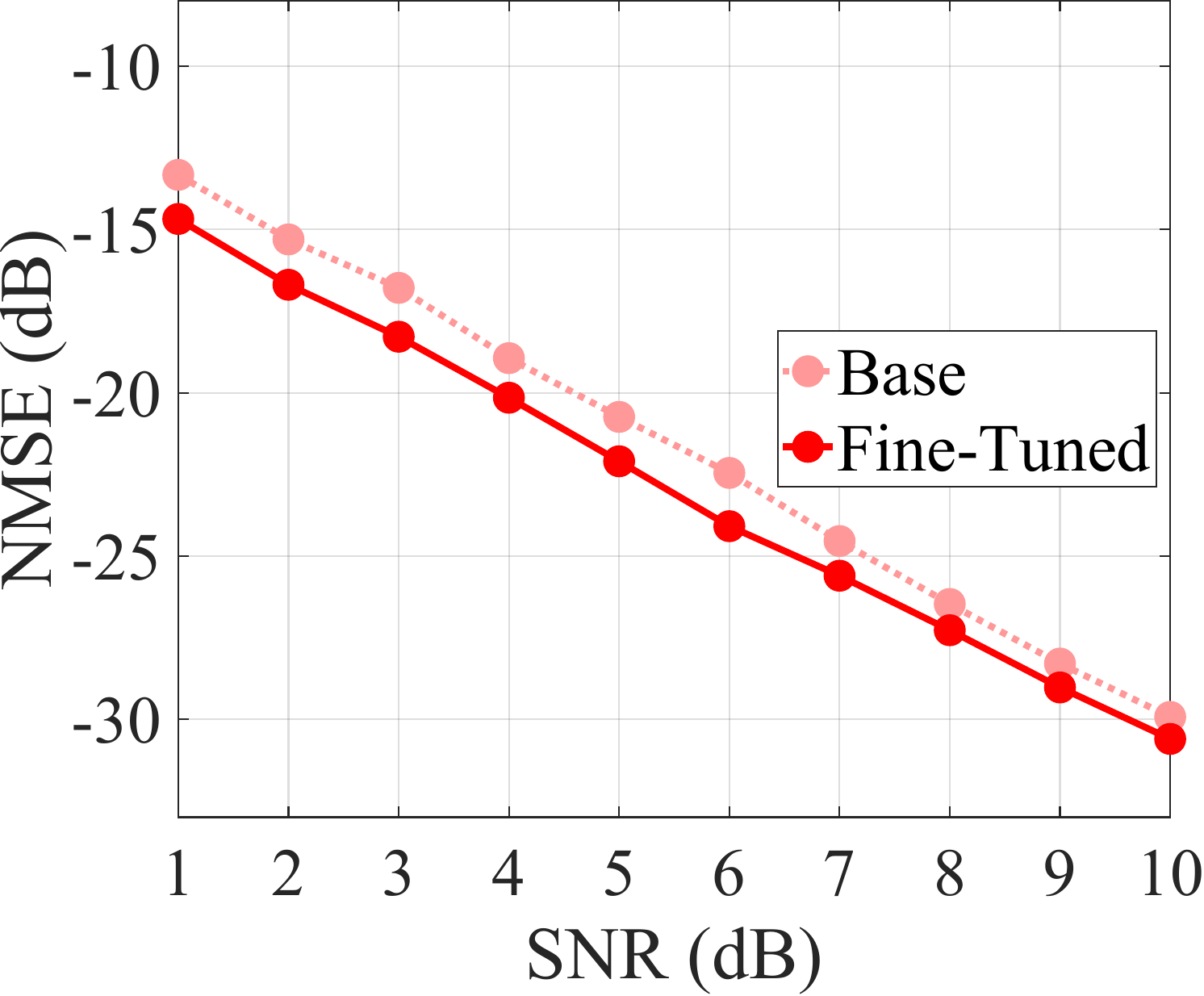}
        \caption{NMSE}
        \label{fig:nmse_abl}
    \end{subfigure}
    \hfill
    \begin{subfigure}{0.195\linewidth}
        \centering
        \includegraphics[width=\linewidth]{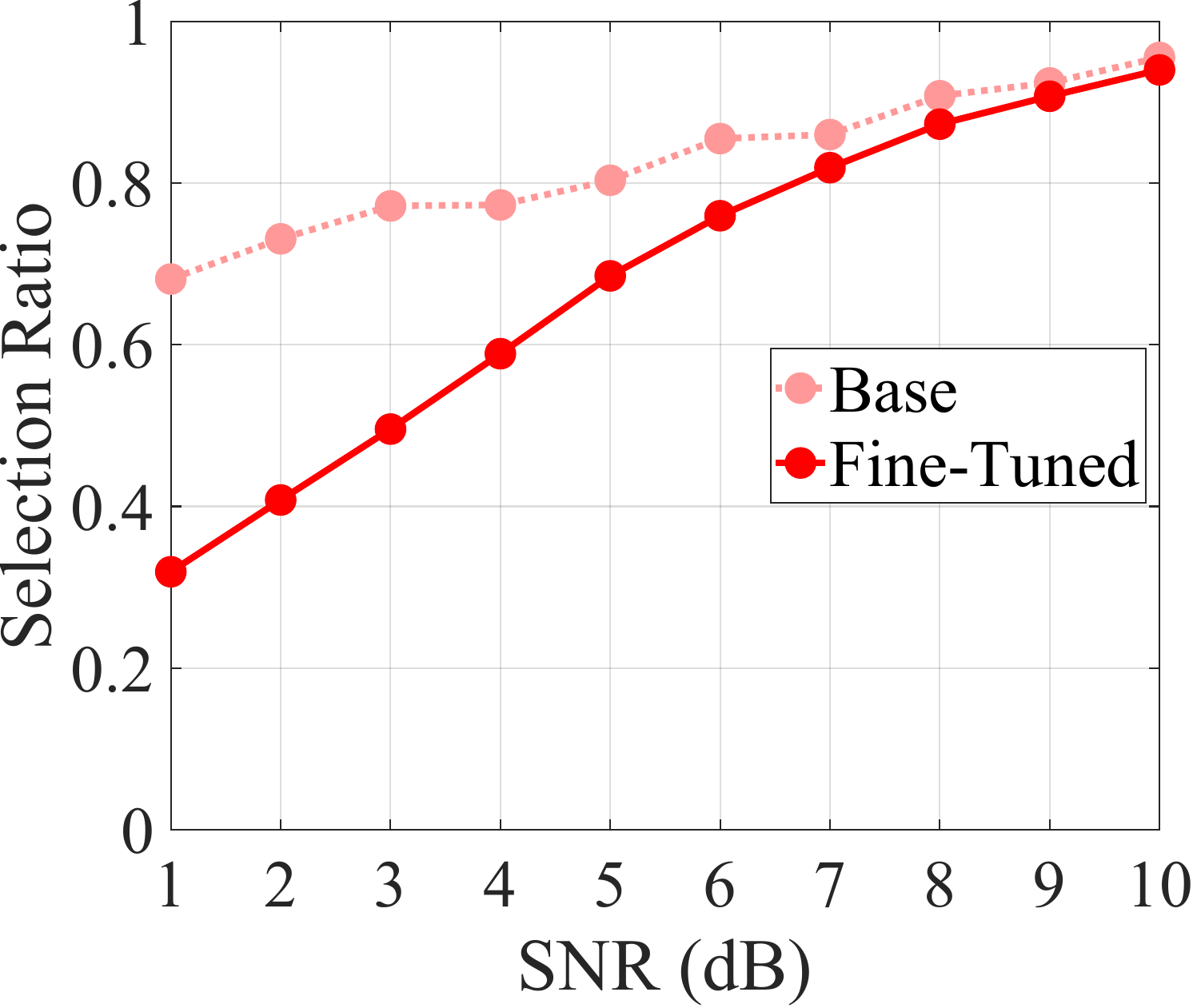}
        \caption{Selection Ratio}
        \label{fig:selction_ratio_abl}
    \end{subfigure}
    \hfill
    \begin{subfigure}{0.195\linewidth}
        \centering
        \includegraphics[width=\linewidth]{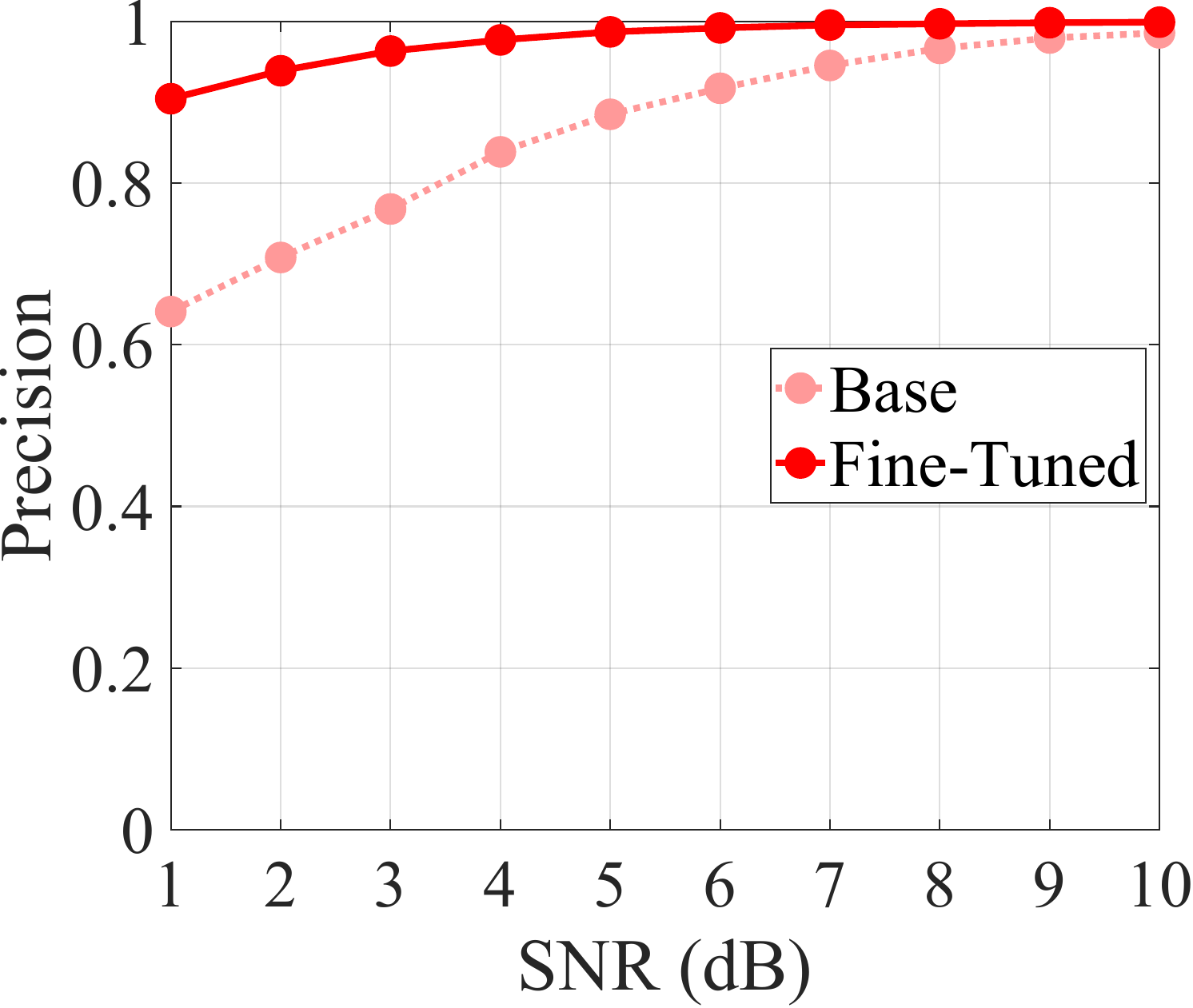}
        \caption{Precision}
        \label{fig:precision_abl}
    \end{subfigure}
    \hfill
    \begin{subfigure}{0.195\linewidth}
        \centering
        \includegraphics[width=\linewidth]{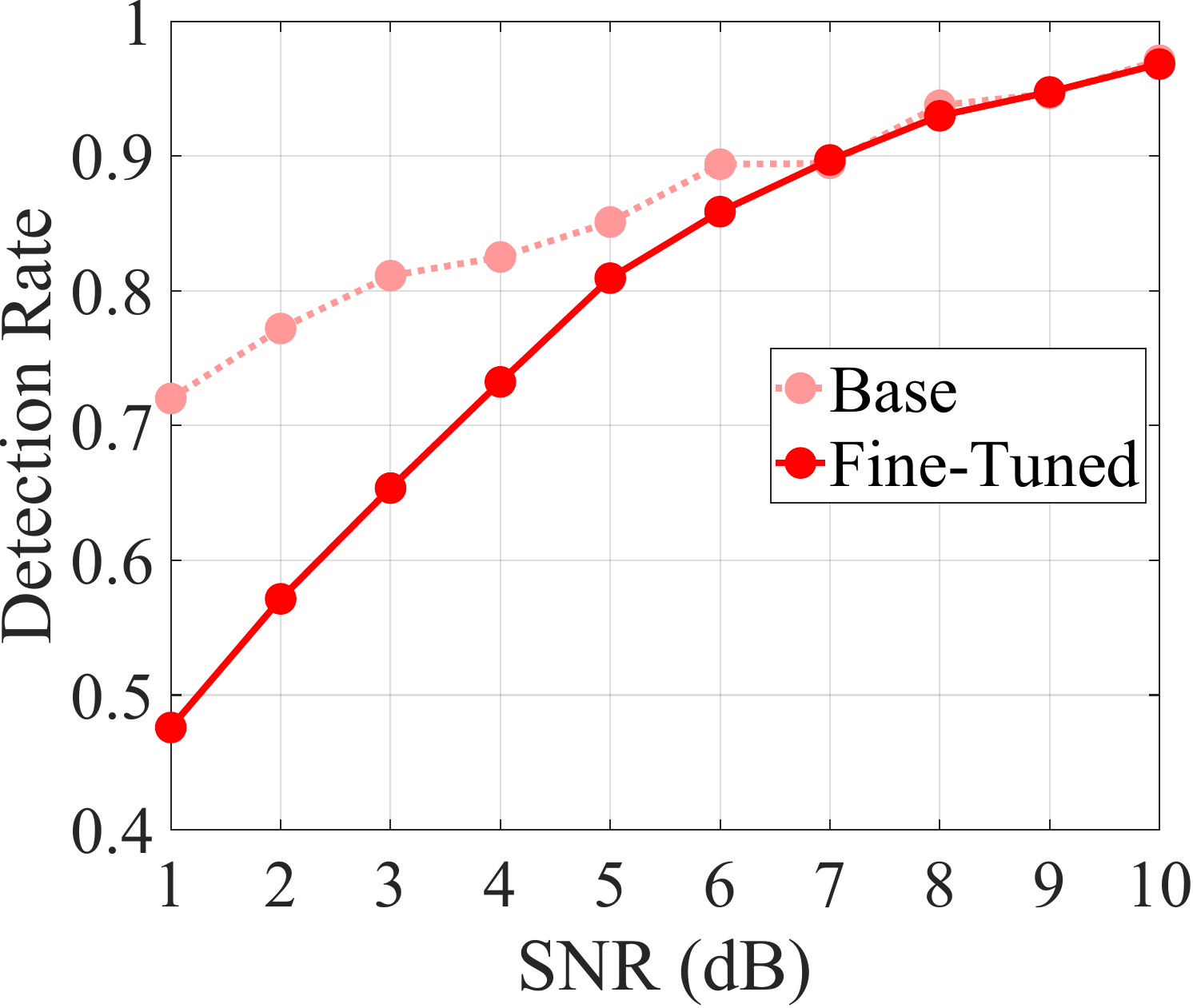}
        \caption{Detection Rate}
        \label{fig:detection_rate_abl}
    \end{subfigure}
    \hfill
    \begin{subfigure}{0.195\linewidth}
        \centering
        \includegraphics[width=\linewidth]{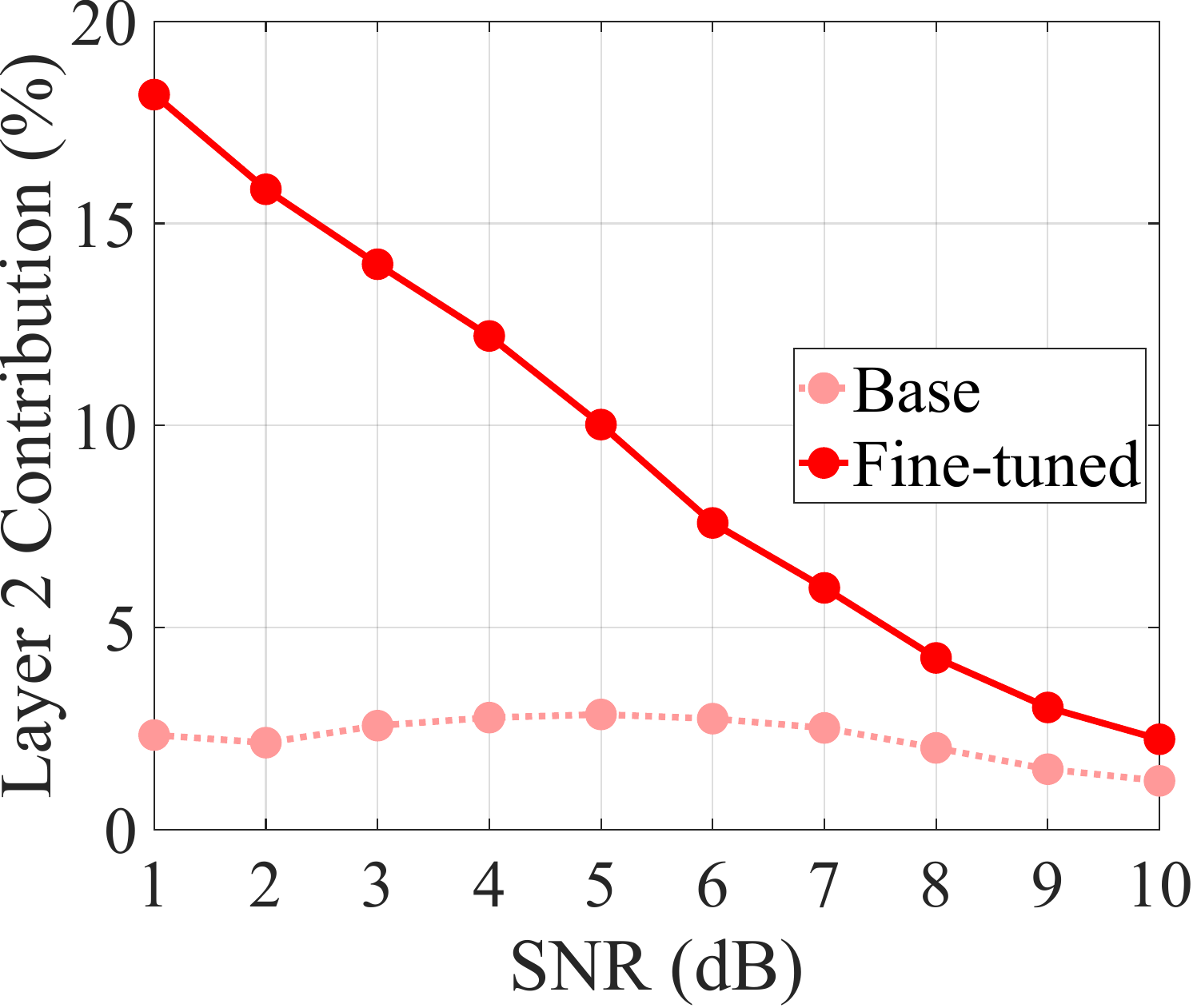}
        \caption{Layer 2 Contribution}
        \label{fig:l2_contribution_abl}
    \end{subfigure}
    \caption{Effect of fine-tuning on the proposed method.}
    \label{fig:ablation_ft}
\end{figure*}

Fig.~\ref{fig:layer_composition} illustrates the semantic pilot composition, with green and blue segments denoting the proportions of pilots selected by Layer 1 and Layer 2, respectively. In the low SNR regime, Layer 2 accounts for up to 18.2\% of the pilot symbols. This ratio diminishes at higher SNR levels as improved signal quality reduces the frequency of detection errors.

The threshold $\tau_d$ is selected using a hyperparameter tuning set. Table~\ref{tab:tau_sweep} reports the effect of $\tau_d$ on Layer~2 performance at 5~dB. As $\tau_d$ increases, more LLM corrections are accepted at the cost of lower precision. The NMSE is minimized at $\tau_d = 2$, which achieves a precision of 70.7\% with an acceptance rate of 20.6\%. We adopt this value across all SNRs, as it provides robust performance over the tested range.

Fig.~\ref{fig:dratio_hist} depicts the distribution of $d_\mathrm{ratio}$ for LLM-corrected symbols evaluated on the 5~dB test set. Symbols are classified as valid if the LLM-corrected version matches the ground-truth transmission and invalid otherwise. Although the two distributions partially overlap, they remain largely distinguishable, with valid corrections concentrated around low $d_\mathrm{ratio}$ values and invalid corrections spread across higher values. This separation confirms that $d_\mathrm{ratio}$ serves as an effective discriminator for filtering out incorrect LLM corrections. With the threshold $\tau_d = 2$, the cross-validation achieves a precision of 78.7\% among the accepted symbols at this SNR level.

\subsection{Effect of Fine-Tuning}
\label{subsec:ablation_ft}

To evaluate the impact of fine-tuning on the proposed method, we compare two configurations: Fine-Tuned, which uses the Qwen3-8B model adapted via QLoRA on the channel-corrupted training corpus, and Base, which uses the same Qwen3-8B model with identical quantization but without any task-specific adaptation. Fig.~\ref{fig:ablation_ft} presents the comparison across five metrics.
As shown in Fig.~\ref{fig:nmse_abl}, Fine-Tuned model consistently achieves a lower NMSE than Base model across all tested SNR levels. 
Fig.~\ref{fig:selction_ratio_abl} demonstrates that the Base model achieves a higher selection ratio than the Fine-Tuned model, particularly in the low SNR regime. However, Fig.~\ref{fig:precision_abl} reveals that this increased selection frequency is accompanied by significantly lower precision. Specifically, at 1~dB, the Base model yields a precision of 0.64, whereas the Fine-Tuned model maintains a precision of 0.90. These findings indicate that the Base model generates outputs that align with the decoded text without successfully resolving the underlying errors. This behavior results in a high volume of unreliable reference symbols that lack the precision necessary for effective channel refinement.
Fig.~\ref{fig:detection_rate_abl} illustrates that the Base model achieves a higher detection rate than the Fine-Tuned model in the low SNR regime. This result occurs as the Base model selects a larger total number of symbols as pilots. By selecting more symbols overall, the model naturally includes more references that are already correct, even if its actual ability to fix errors remains limited.
The most significant difference appears in Fig.~\ref{fig:l2_contribution_abl}. Fine-Tuned model achieves a Layer~2 contribution of 18.2\% at 1~dB, whereas Base model remains below 3\% across all SNR levels. This indicates that the Fine-Tuned model produces corrections that are sufficiently accurate to pass the $d_\mathrm{ratio}$ cross-validation, while the Base model's corrections are largely rejected as unreliable. Since Layer~2 is the primary source of the NMSE advantage, the low Layer~2 contribution of the Base model directly explains its inferior NMSE performance. These results confirm that fine-tuning is essential for the proposed framework to fully exploit the correction capability of the LLM.

\section{Conclusion}
\label{sec:conclusion}
In this paper, we proposed a novel semantic-aware data-aided channel estimation framework for MIMO systems. The proposed framework exploits the contextual information captured by the LLM to shift the paradigm of data-aided estimation from merely selecting reliable symbols to actively recovering unreliable symbols. To ensure robustness against potential LLM hallucinations, we developed a two-layer selection mechanism that combines semantic verification with physical-layer validation, supported by a NW alignment module to handle text length mismatches. Our theoretical analysis established a fundamental advantage of this approach, proving that recovering misdetected symbols provides a greater reduction in channel estimation error compared to using already correctly decoded symbols. Simulation results using a fine-tuned LLM confirmed that the proposed framework substantially improves channel estimation accuracy and detection performance, closely approaching the oracle bound. Overall, this work demonstrates that integrating payload semantics into physical-layer processing offers a highly effective pathway for enhancing the reliability of wireless communications.

\appendices

\section{Proof of Lemma~\ref{lem:ta_delta}}
\label{app:lemma_proof}

Let $\mathbf{y} = \mathbf{H}\mathbf{x} + \mathbf{n}$ denote the received signal of the additional pilot $\mathbf{x}$, where $\mathbf{n} \sim \mathcal{CN}(\mathbf{0},\sigma^2\mathbf{I}_{N_\mathrm{r}})$.
We assume that $\mathbf{n}$ is independent of the detection outcome that determines the reference type of $\mathbf{x}$. The expectation $\mathbb{E}_\mathbf{n}[\cdot]$ below is therefore taken over its nominal distribution.

Augmenting the pilot block with $(\mathbf{x},\mathbf{y})$, the LMMSE estimator becomes
\begin{align}
\hat{\mathbf{H}}_{\mathrm{new}} &= [\mathbf{Y}_\mathrm{p},\mathbf{y}][\mathbf{X}_\mathrm{p},\mathbf{x}]^H \Big([\mathbf{X}_\mathrm{p},\mathbf{x}][\mathbf{X}_\mathrm{p},\mathbf{x}]^H + \sigma^2\mathbf{I}_{N_\mathrm{t}}\Big)^{-1} \notag\\
&= \big(\mathbf{Y}_\mathrm{p}\mathbf{X}_\mathrm{p}^H + \mathbf{y}\mathbf{x}^H\big) \big(\mathbf{X}_\mathrm{p}\mathbf{X}_\mathrm{p}^H + \mathbf{x}\mathbf{x}^H + \sigma^2\mathbf{I}_{N_\mathrm{t}}\big)^{-1} \notag\\
&= \big(\mathbf{Y}_\mathrm{p}\mathbf{X}_\mathrm{p}^H + \mathbf{y}\mathbf{x}^H\big) \big((L_\mathrm{p}+\sigma^2)\mathbf{I}_{N_\mathrm{t}} + \mathbf{x}\mathbf{x}^H\big)^{-1}.
\label{eq:app_lmmse_expand}
\end{align}
The inverse in~\eqref{eq:app_lmmse_expand} is computed using $\|\mathbf{x}\|^2 = N_\mathrm{t}$ as
\begin{align}
&\big((L_\mathrm{p}+\sigma^2)\mathbf{I}_{N_\mathrm{t}} + \mathbf{x}\mathbf{x}^H\big)^{-1} \notag\\
&\quad= \frac{1}{L_\mathrm{p}+\sigma^2}\mathbf{I}_{N_\mathrm{t}} - \frac{\mathbf{x}\mathbf{x}^H}{(L_\mathrm{p}+\sigma^2)(L_\mathrm{p}+\sigma^2 + N_\mathrm{t})}.
\label{eq:app_inverse}
\end{align}
Substituting~\eqref{eq:app_inverse} into~\eqref{eq:app_lmmse_expand} and using $\hat{\mathbf{H}}_\mathrm{p} = \mathbf{Y}_\mathrm{p}\mathbf{X}_\mathrm{p}^H/(L_\mathrm{p}+\sigma^2)$ yields
\begin{equation}
\hat{\mathbf{H}}_{\mathrm{new}} = \hat{\mathbf{H}}_\mathrm{p} + \frac{(\mathbf{y} - \hat{\mathbf{H}}_\mathrm{p}\mathbf{x})\mathbf{x}^H}{L_\mathrm{p}+\sigma^2 + N_\mathrm{t}}.
\label{eq:app_update}
\end{equation}
Applying $\mathbf{y} = \mathbf{H}\mathbf{x} + \mathbf{n}$ and $\hat{\mathbf{H}}_\mathrm{p} = \mathbf{H} + \mathbf{E}$ into~\eqref{eq:app_update} gives
\begin{equation}
\hat{\mathbf{H}}_{\mathrm{new}} - \mathbf{H} = \mathbf{E}\Big(\mathbf{I}_{N_\mathrm{t}} - \frac{\mathbf{x}\mathbf{x}^H}{D}\Big) + \frac{\mathbf{n}\mathbf{x}^H}{D},
\label{eq:app_residual}
\end{equation}
where $D = L_\mathrm{p}+\sigma^2 + N_\mathrm{t}$. Taking the squared Frobenius norm of~\eqref{eq:app_residual} and the expectation over $\mathbf{n}$ yields
\begin{align}
&\mathbb{E}_{\mathbf{n}}\!\left[\|\hat{\mathbf{H}}_{\mathrm{new}} -\mathbf{H}\|_F^2 \,\big|\, \mathbf{E},\mathbf{x}\right] \notag\\
&\quad= \big\|\mathbf{E}(\mathbf{I}_{N_\mathrm{t}} - \tfrac{\mathbf{x}\mathbf{x}^H}{D})\big\|_F^2 + \frac{\sigma^2\,N_\mathrm{r}\,N_\mathrm{t}}{D^2}.
\label{eq:app_two_terms}
\end{align}
The first term in~\eqref{eq:app_two_terms} expands as
\begin{align}
&\big\|\mathbf{E}(\mathbf{I}_{N_\mathrm{t}} - \tfrac{\mathbf{x}\mathbf{x}^H}{D})\big\|_F^2 \notag\\
&\quad= \mathrm{tr}\!\left[\mathbf{E}\big(\mathbf{I}_{N_\mathrm{t}} - \tfrac{2\mathbf{x}\mathbf{x}^H}{D} + \tfrac{\mathbf{x}\mathbf{x}^H\mathbf{x}\mathbf{x}^H}{D^2}\big) \mathbf{E}^H\right] \\
&\quad= \mathrm{tr}\!\left[\mathbf{E}\big(\mathbf{I}_{N_\mathrm{t}} - \tfrac{2D - N_\mathrm{t}}{D^2}\mathbf{x}\mathbf{x}^H\big) \mathbf{E}^H\right] \\
&\quad= \|\mathbf{E}\|_F^2 - \frac{2D - N_\mathrm{t}}{D^2}\,R(\mathbf{x};\mathbf{E}).
\label{eq:app_first_term}
\end{align}
Combining~\eqref{eq:app_two_terms} and~\eqref{eq:app_first_term} gives
\begin{equation}
\mathbb{E}_{\mathbf{n}}\!\left[\|\hat{\mathbf{H}}_{\mathrm{new}} -\mathbf{H}\|_F^2 \,\big|\, \mathbf{E},\mathbf{x}\right] = \|\mathbf{E}\|_F^2 - a\,R(\mathbf{x};\mathbf{E}) + b,
\label{eq:app_cond_red}
\end{equation}
with $(a,b)$ as defined in~\eqref{eq:ta_ab}. Subtracting~\eqref{eq:app_cond_red} from $\|\mathbf{E}\|_F^2$ and taking the expectation over $\mathbf{E}$ yields
\begin{equation}
\begin{split}
\Delta(\mathbf{x}) &= \mathbb{E}\!\left[\|\mathbf{E}\|_F^2 - \mathbb{E}_{\mathbf{n}}\!\left[\|\hat{\mathbf{H}}_{\mathrm{new}} -\mathbf{H}\|_F^2 \,\big|\, \mathbf{E},\mathbf{x}\right]\right] \\
&= a\,\mathbb{E}\!\left[R(\mathbf{x};\mathbf{E})\,\big|\,\mathbf{x}\right] - b,
\end{split}
\label{eq:app_delta_final}
\end{equation}
which establishes~\eqref{eq:ta_delta}.\hfill$\blacksquare$

\section{Proof of Proposition~\ref{prop:ta_bias}}
\label{app:prop_proof}

By Lemma~\ref{lem:ta_delta}, both reference types share the same coefficients $(a,b)$ with $a>0$. Therefore, we only need to compare $\mathbb{E}[R(\mathbf{x};\mathbf{E})\mid\mathcal{C}]$ and $\mathbb{E}[R(\mathbf{x};\mathbf{E})\mid\mathcal{C}^{c}]$, where $\mathcal{C}$ denotes the event that the initial detector decodes the true symbol $\mathbf{x}$ correctly. We proceed in three steps.

\subsubsection*{Step 1: Independence of the pilot estimate and its error}
Under the orthogonal pilot $\mathbf{X}_\mathrm{p}\mathbf{X}_\mathrm{p}^H = L_\mathrm{p}\mathbf{I}_{N_\mathrm{t}}$, \eqref{eq:lmmse} gives
\begin{equation}
\hat{\mathbf{H}}_\mathrm{p} = \frac{L_\mathrm{p}\mathbf{H} + \mathbf{N}_\mathrm{p}\mathbf{X}_\mathrm{p}^H}{L_\mathrm{p}+\sigma^2},
\qquad
\mathbf{E} = \frac{-\sigma^2\mathbf{H} + \mathbf{N}_\mathrm{p}\mathbf{X}_\mathrm{p}^H}{L_\mathrm{p}+\sigma^2}.
\label{eq:app_hp_E}
\end{equation}
Let $\tilde{\mathbf{N}} = \mathbf{N}_\mathrm{p}\mathbf{X}_\mathrm{p}^H$, whose rows are zero-mean with covariance $\sigma^2 L_\mathrm{p}\mathbf{I}_{N_\mathrm{t}}$ and independent of $\mathbf{H}$. The cross-covariance between corresponding rows of $\hat{\mathbf{H}}_\mathrm{p}$ and $\mathbf{E}$ is
\begin{align}
&\mathbb{E}\!\big[(L_\mathrm{p}\mathbf{h}+\tilde{\mathbf{n}})(-\sigma^2\mathbf{h}+\tilde{\mathbf{n}})^H\big] \notag\\
&\quad= -\sigma^2 L_\mathrm{p}\mathbf{I}_{N_\mathrm{t}} + \sigma^2 L_\mathrm{p}\mathbf{I}_{N_\mathrm{t}} = \mathbf{0}.
\label{eq:app_crosscov}
\end{align}
Since $\hat{\mathbf{H}}_\mathrm{p}$ and $\mathbf{E}$ are jointly Gaussian, \eqref{eq:app_crosscov} implies that they are statistically independent. Consequently, the detection margin $\mathbf{d}_k = \hat{\mathbf{H}}_\mathrm{p}(\mathbf{x}-\mathbf{x}_k)$ is independent of $\mathbf{E}$.

Moreover, from \eqref{eq:app_hp_E} each row of $\mathbf{E}$ is $\mathcal{CN}(\mathbf{0},\alpha\mathbf{I}_{N_\mathrm{t}})$ with $\alpha = \sigma^2/(L_\mathrm{p}+\sigma^2)$, so for any deterministic $\mathbf{x}$ with $\|\mathbf{x}\|^2=N_\mathrm{t}$,
\begin{equation}
\mathbf{E}\mathbf{x}\sim\mathcal{CN}(\mathbf{0},\alpha N_\mathrm{t}\mathbf{I}_{N_\mathrm{r}}),
\qquad
\mathbf{E}\mathbf{x} = \sqrt{R}\,\mathbf{u},
\label{eq:app_isotropy}
\end{equation}
where $R = R(\mathbf{x};\mathbf{E}) = \|\mathbf{E}\mathbf{x}\|^2$ and $\mathbf{u}$ is uniform on the unit sphere, independent of $R$. In particular, the unconditional distribution of $R$ is identical for every true symbol. The difference between a correctly-decoded reference $\mathbf{x}_\mathrm{D}$ and a corrected reference $\mathbf{x}_\mathrm{C}$ therefore arises solely from conditioning on the detection outcome, i.e.,
\begin{equation}
\mathbb{E}[R(\mathbf{x}_\mathrm{D};\mathbf{E})] = \mathbb{E}[R\mid\mathcal{C}],
\quad
\mathbb{E}[R(\mathbf{x}_\mathrm{C};\mathbf{E})] = \mathbb{E}[R\mid\mathcal{C}^{c}].
\label{eq:app_cond_identity}
\end{equation}

\subsubsection*{Step 2: The correct-decision probability decreases in $R$}
Writing the effective noise as $\mathbf{w}=\mathbf{n}-\mathbf{E}\mathbf{x}$, the detection metric in~\eqref{eq:map} evaluated at the true symbol $\mathbf{x}$ and at a competing candidate $\mathbf{x}_k$ satisfies
\begin{equation}
\|\mathbf{y}-\hat{\mathbf{H}}_\mathrm{p}\mathbf{x}\|^2 = \|\mathbf{w}\|^2, \qquad
\|\mathbf{y}-\hat{\mathbf{H}}_\mathrm{p}\mathbf{x}_k\|^2 = \|\mathbf{w}+\mathbf{d}_k\|^2.
\label{eq:app_metrics}
\end{equation}
The event $\mathcal{C}$ occurs when the true symbol attains the smallest metric,
\begin{equation}
\|\mathbf{w}\|^2 < \|\mathbf{w}+\mathbf{d}_k\|^2, \qquad \forall k.
\label{eq:app_minmetric}
\end{equation}
Rearranging~\eqref{eq:app_minmetric} yields
\begin{equation}
2\,\mathrm{Re}\{\langle\mathbf{w},\mathbf{d}_k\rangle\} + \|\mathbf{d}_k\|^2 > 0,\qquad\forall k,
\label{eq:app_correct_cond}
\end{equation}
where $\mathrm{Re}\{\cdot\}$ denotes the real part of its argument.

In the operating SNR range, we adopt the standard nearest-neighbor approximation, under which the error event is dominated by the minimum-margin competitor. Denoting its margin by $\mathbf{d} = \arg\min_{k}\|\mathbf{d}_k\|$, the condition in~\eqref{eq:app_correct_cond} then reduces to the single pairwise test
\begin{equation}
2\,\mathrm{Re}\{\langle\mathbf{w},\mathbf{d}\rangle\} + \|\mathbf{d}\|^2 > 0.
\end{equation}
Substituting $\mathbf{w}=\mathbf{n}-\mathbf{E}\mathbf{x}$ separates the noise and the channel-error contributions,
\begin{equation}
2\,\mathrm{Re}\{\langle\mathbf{n},\mathbf{d}\rangle\} > 2\,\mathrm{Re}\{\langle\mathbf{E}\mathbf{x},\mathbf{d}\rangle\} - \|\mathbf{d}\|^2.
\label{eq:app_pairwise_split}
\end{equation}
Conditioned on $(\mathbf{E},\mathbf{d})$, the term $2\,\mathrm{Re}\{\langle\mathbf{n},\mathbf{d}\rangle\}$ is a real Gaussian random variable with zero mean and variance $2\sigma^2\|\mathbf{d}\|^2$, since $\mathbf{n}\sim\mathcal{CN}(\mathbf{0},\sigma^2\mathbf{I})$. Therefore, the probability that~\eqref{eq:app_pairwise_split} holds is
\begin{equation}
\Pr\{\mathcal{C}\mid\mathbf{E},\mathbf{d}\} = \Phi\!\left(\frac{\|\mathbf{d}\|^2 - 2\,\mathrm{Re}\{\langle\mathbf{E}\mathbf{x},\mathbf{d}\rangle\}}{\sqrt{2}\,\sigma\|\mathbf{d}\|}\right),
\label{eq:app_phi}
\end{equation}
where $\Phi(\cdot)$ is the standard normal cumulative distribution function, with derivative $\phi(\cdot) = \Phi'(\cdot)$ denoting the standard normal probability density function. Substituting $\mathbf{E}\mathbf{x} = \sqrt{R}\,\mathbf{u}$ into~\eqref{eq:app_phi} and letting $\gamma = \|\mathbf{d}\|/(\sqrt{2}\sigma) > 0$ and $t = (\sqrt{2}/\sigma)\,\mathrm{Re}\langle\mathbf{u}, \mathbf{d}/\|\mathbf{d}\|\rangle$ yields
\begin{equation}
\Pr\{\mathcal{C}\mid R,\mathbf{d},\mathbf{u}\} = \Phi\big(\gamma - \sqrt{R}\,t\big).
\label{eq:app_phi2}
\end{equation}
Since $\mathbf{u}$ is independent of $\mathbf{d}$ and isotropic, $t$ is symmetric about $0$. Define $p(r) = \Pr\{\mathcal{C}\mid R=r\} = \mathbb{E}_{\mathbf{d},t}[\Phi(\gamma-\sqrt{r}\,t)]$. Differentiating and pairing $t$ with $-t$,
\begin{align}
\frac{d p(r)}{d r}
&= -\frac{1}{2\sqrt{r}}\,\mathbb{E}_{\mathbf{d}}\,\mathbb{E}_{t>0}\!\Big[t\big(\phi(\gamma-\sqrt{r}\,t)-\phi(\gamma+\sqrt{r}\,t)\big)\Big].
\label{eq:app_deriv}
\end{align}
For $\gamma,r,t>0$, we have $(\gamma+\sqrt{r}\,t)^2-(\gamma-\sqrt{r}\,t)^2 = 4\gamma\sqrt{r}\,t>0$. This implies $\phi(\gamma+\sqrt{r}\,t)<\phi(\gamma-\sqrt{r}\,t)$, which makes the bracket in~\eqref{eq:app_deriv} strictly positive. Therefore, $p'(r)<0$, and the correct-decision probability is a strictly decreasing function of $R$.

\subsubsection*{Step 3: Selection bias via a correlation inequality}
Let $g(R) = p(R)$, which is strictly decreasing by Step~2. By the law of total probability, $\mathbb{E}[g(R)] = \mathbb{E}[\Pr\{\mathcal{C}\mid R\}] = \Pr\{\mathcal{C}\}$, which we denote by $P_\mathrm{c}\in(0,1)$. For two i.i.d. copies $R,R'$, monotonicity gives $(R-R')\big(g(R)-g(R')\big)\le 0$ almost surely, with strict inequality on a set of positive probability. Taking expectations,
\begin{equation}
\mathbb{E}[R\,g(R)] - \mathbb{E}[R]\,\mathbb{E}[g(R)] < 0.
\label{eq:app_chebyshev}
\end{equation}
Note that $g(R) = \Pr\{\mathcal{C}\mid R\} = \mathbb{E}[\mathds{1}_{\mathcal{C}}\mid R]$, where $\mathds{1}_{\mathcal{C}}$ is the indicator of the event $\mathcal{C}$. Hence $\mathbb{E}[R\,g(R)] = \mathbb{E}[R\,\mathds{1}_{\mathcal{C}}] = P_\mathrm{c}\,\mathbb{E}[R\mid\mathcal{C}]$, and~\eqref{eq:app_chebyshev} reduces to $\mathbb{E}[R\mid\mathcal{C}] < \mathbb{E}[R]$.
Applying the same argument to the strictly increasing function $1-g(R)$ gives $\mathbb{E}[R\mid\mathcal{C}^{c}] > \mathbb{E}[R]$. Combining these with the identity~\eqref{eq:app_cond_identity} of Step~1 yields
\begin{equation}
\mathbb{E}\big[R(\mathbf{x}_\mathrm{D};\mathbf{E})\big] < \mathbb{E}[R] < \mathbb{E}\big[R(\mathbf{x}_\mathrm{C};\mathbf{E})\big],
\end{equation}
which establishes~\eqref{eq:ta_bias}.\hfill$\blacksquare$

\bibliographystyle{IEEEtran}
\bibliography{strings,refs}




\end{document}